	\newcommand{\presec}{\vspace{-0.05in}}
	\newcommand{\postsec}{\vspace{-0.05in}}
	\newcommand{\presub}{\vspace{-0.05in}}
	\newcommand{\postsub}{\vspace{-0.05in}}
	\newcommand{\prefigcaption}{\vspace{-0in}}
	\newcommand{\postfig}{\vspace{-0.05in}}
		\mathchardef\Gamma="0100 \mathchardef\Delta="0101
\mathchardef\Theta="0102 \mathchardef\Lambda="0103
\mathchardef\Xi="0104 \mathchardef\Pi="0105
\mathchardef\Sigma="0106 \mathchardef\Upsilon="0107
\mathchardef\Phi="0108 \mathchardef\Psi="0109
\mathchardef\Omega="010A
\newcommand{\outline}[1]{}
\newtheorem{thrm}{Theorem}[section]
\newtheorem{thm}{Theorem}
\newcommand{\ie}{\emph{i.e.}\xspace}
\newcommand{\etc}{\emph{etc.}\xspace}
\newcommand{\etal}{\frenchspacing{}\emph{et al{.}}\xspace}
\newcommand{\Comment}[1]{}
\title{SF-sketch: A Two-stage Sketch for Data Streams}
\author{Tong Yang$^1$, Lingtong Liu$^2$, Yibo Yan$^1$, Muhammad Shahzad$^3$, Yulong Shen$^2$ \\ Xiaoming Li$^1$, Bin Cui$^1$, Gaogang Xie$^4$\\
$^1$Peking University, China. $^2$Xidian University, China. \\ $^3$North Carolina State University, USA.~~~~~ $^4$ICT, CAS, China
}
\begin{document}
\maketitle
\sloppy
	\begin{abstract}
A \emph{sketch} is a probabilistic data structure used to record frequencies of items in a multi-set.
Sketches are widely used in various fields, especially those that involve processing and storing data streams.
In streaming applications with high data rates, a sketch ``fills up'' very quickly.
Thus, its contents are periodically transferred to the remote collector, which is responsible for answering queries.
In this paper, we propose a new sketch, called Slim-Fat (SF) sketch, which has a significantly higher accuracy compared to prior art, a much smaller memory footprint, and at the same time achieves the same speed as the best prior sketch.
The key idea behind our proposed SF-sketch is to maintain two separate sketches: a small sketch called Slim-subsketch and a large sketch called Fat-subsketch.
The Slim-subsketch is periodically transferred to the remote collector for answering queries quickly and accurately.
The Fat-subsketch, however, is not transferred to the remote collector because it is used only to assist the Slim-subsketch during the insertions and deletions and is not used to answer queries.
We implemented and extensively evaluated SF-sketch along with several prior sketches and compared them side by side.
Our experimental results show that SF-sketch outperforms the most widely used CM-sketch by up to 33.1 times in terms of accuracy.
We have released the source codes of our proposed sketch as well as existing sketches at Github \cite{opensource}.
The short version of this paper will appear in ICDE 2017\cite{short}.
\end{abstract}

	\presec
\section{Introduction} \postsec

\subsection{Background and Motivation} \postsub
A \emph{sketch} is a probabilistic data structure that is used to record frequencies of distinct items in a multi-set.
Due to their small memory footprints, high accuracy, and fast speeds of queries, insertions, and deletions, sketches are being extensively used in data stream processing \cite{sdm10sketch}, \cite{chen2010tracking}, \cite{cormode2005sketching},\cite{cormode2008finding}, \cite{sigsketch}, \cite{asketch}, \cite{icde09fcm}, \cite{shbf}, \cite{sail},.
Sketches are also being applied in many other fields, such as NLP \cite{cubest,NLPstreaming09,NLPsketch11},
 sparse approximation in compressed sensing \cite{gilbert2007one}, and more \cite{li1sketch,durme2009streaming,goyalIII}.
%
%

Data streams are generated and used in many scenarios, such as network traffic, graph streams, and multi-media streams.
For example, a typical application is to measure the number of packets for each flow in the traffic of a network.
In most of these applications, the rate at which the data arrives in the stream is very high and a sketch being used to record information in a stream ``fills up'' very quickly.
Consequently, each monitoring node, such as a router or switch, that populates the sketch has to periodically transfer the contents of the filled up sketch to some remote collector \cite{flowradar,remotecollectorsigcomm,remotecollectorNSDI}, which stores the contents of these sketches and answers any queries.

As existing sketches need only a few (\textit{e.g.}, 4) memory accesses for each insertion, the insertion speed of sketches is often fast enough to keep up with the fast rates of data streams.
However, the bottleneck lies at the bandwidth and high speed memory of remote controller, because it receives filled up sketches from a large number of monitoring nodes.
Consequently, it is critical that the size of the filled up sketch that each monitoring node sends to the collector be very small but still contain enough information that allows the remote collector to answer queries accurately.
The accuracy of a sketch in answering queries quantifies how close the value of the frequency estimated from the sketch is to the actual value of the frequency.
This paper focuses on the design of a new scheme that not only significantly reduces the size of the sketch that it sends to the collector compared with the existing sketches, but is also more accurate while achieving the same query speed as the best prior sketches.

\presub
\subsection{Limitations of Prior Art} \postsub
Here we only introduce four typical sketches: C-sketch, CM-sketch, CU-sketch, and CML-sketch.
Charikar \etal proposed the Count sketch (C-sketch) \cite{countsketch}.
C-sketch experiences two types of errors: over-estimation error, where the result of a query is a value larger than the true value, and under-estimation error, where the result of a query is a value smaller than the true value.
Improving on the C-sketch, Cormode and Muthukrishnan proposed the Count-min (CM) sketch \cite{cmsketch}, which does not suffer from the under-estimation error, but only from the over-estimation error. And the authors claimed that such one-side error has many benefits \cite{cmsketch}.
In a further enhancement, Cormode \etal proposed the conservative update (CU) sketch \cite{cusketch}, which improves the accuracy at the cost of not supporting item deletions. 
CML-sketch \cite{cmlog} further improves the accuracy at the cost of suffering from both over-estimation and under-estimation errors.
Because CM-sketch supports deletions and does not have under-estimation errors, it is still the most popular sketch in practice.
As mentioned above, given the desired accuracy and transmission period, smaller sketch leads to less requirement for bandwidth and fast expensive fast memory and faster query speed in the collector.
%

\presub
\subsection{Proposed Approach}
\label{subsec:ProposedApproach}
\postsub
In this paper, we present a novel two-stage sketch, called Slim-Fat (SF) sketch.
The key idea behind our proposed SF-sketch is to maintain two separate sketches: a small sketch called Slim-subsketch and a large sketch called Fat-subsketch.
Both subsketches have similar accuracy, and thus, we only need to send the slim-subsketch to the remote collector.
Compared to the state-of-the-art, the slim-subsketch achieves significantly higher accuracy and significantly smaller memory footprint while supporting deletions and achieving the same query speed.
%

%
Before describing our approach, we first briefly describe how the conventional CM-sketch works because several design choices in SF-sketch are built on it.
%
%
%
As shown in Figure \ref{draw:cmsketch}, a CM-sketch consists of $d$ arrays, where we represent the $i^\text{th}$ array with $\textbf{A}_i$.
Each array consists of $w$ buckets and each bucket contains one counter.
We represent the counter in the $j^\text{th}$ bucket of the $i^\text{th}$ array with $A_i[j]$.
Each array $\textbf{A}_i$ ($1\leqslant i\leqslant d$), is associated with an independent hash function $h_i(.)$, whose output is uniformly distributed in the range $[1, w]$. 
In CM-sketch, the initialization operation is simply to set all the $d*w$ counters to zero.
To insert an item $e$, \ie, to increment its frequency stored in the sketch by 1, the CM-sketch computes the $d$ hash functions $h_1(e), h_2(e), \ldots, h_d(e)$ and increments the $d$ counters $A_1[h_1(e)], A_2[h_2(e)], \ldots, A_d[h_d(e)]$ by 1.
For convenience, we call them \texttt{the $d$ hashed counters}.
%
To delete an item $e$, the CM-sketch computes $d$ hash functions and decrements the $d$ hashed counters by 1.
When querying the frequency of an item $e$, the CM-sketch computes hash functions and returns the value of the smallest one among the $d$ hashed counters.
Note that the value returned by the CM-sketch in response to the query for the frequency of an item will never be smaller than the true value of its frequency.
Consequently, CM-sketch suffers only from the under-estimation error, but not from the over-estimation error \cite{cmsketch}.
%
%
%
%

\begin{figure}[htbp]
\postfig
\centering
\includegraphics[width=0.4\textwidth]{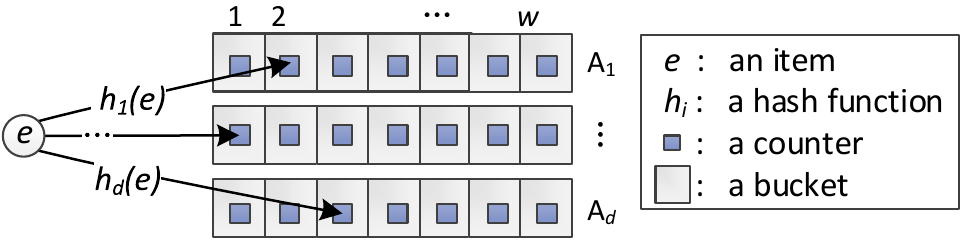}
\postfig\postfig
\caption{The Count-min sketch architecture.}
\label{draw:cmsketch}
\postfig
\end{figure}

In SF-sketch, the Slim-subsketch, as the name suggests, has significantly fewer counters compared to the Fat-subsketch.
The motivation behind keeping the small Slim-subsketch is to use smaller memory while keeping high accuracy.
The motivation behind keeping the relatively large Fat-subsketch is to assist the Slim-subsketch during updates so as to make the accuracy of the Slim-subsketch as high as possible.
Fat-subsketch uses many more counters compared to Slim-subsketch, but only needs tens of Mega bytes, which is very small compared to the large but cheap memory of current computers.
In designing our SF-sketch, we start with a bare bones version of the sketch and make improvements step by step to arrive at its final design.
In this process, we present five versions of SF-sketch, namely SF$_1$-sketch through SF$_4$-sketch, and SF$_\text{F}$-sketch, where each version improves upon the previous versions by addressing some of the limitations in those versions.
The version SF$_\text{F}$-sketch is the final design of our SF-sketch.

In SF$_1$-sketch, the Slim-subsketch consists of $d\times w$ counters and the Fat-subsketch consists of $d\times w'$ counters, where $w'>w$.
The Fat-subsketch in SF$_1$-sketch is the same as a standard CM-sketch.
The \textit{key idea} behind the design of SF$_1$-sketch is that, when inserting an item $e$, \textit{if the value of a counter to which the hash function $h_i(e)$ points is already greater than the real frequency of the item $e$, then incrementing that counter will only degrade the accuracy.}
Specifically, when inserting an item $e$, we first insert it into the Fat-subsketch using the insertion operation of the standard CM-sketch and then query its current frequency from this Fat-subsketch using the query operation of the standard CM-sketch.
Suppose the Fat-subsketch estimates $e$'s current frequency to be $c$.
Next, we compute $d$ hash functions corresponding to the $d$ arrays of the Slim-subsketch, and retrieve $d$ hashed counters.
If all $d$ counters are less than or equal to the estimate $c$, we increment only the smallest counter(s)
by 1 in the Slim-subsketch; \textit{otherwise, we do nothing.}
The motivation behind incrementing no counter or only the smallest counter(s) in the Slim-subsketch is twofold.
First, by reducing the number of counters that we increment, the over-estimation error reduces.
Second, as fewer counters are incremented, their sizes can be reduced, which reduces the footprint of the Slim-subsketch. Unfortunately, the SF$_1$-sketch does not support deletions.
To well support deletion, we propose several optimization techniques, and finally arrive at the final version -- SF$_\text{F}$ sketch.

\Comment{To achieve faster update speeds and higher accuracy, we build on the SF$_3$-sketch to propose a family of sketches that stores multiple counters per bucket.
We name these sketches as SF$_4$-sketch and SF$_\text{F}$-sketch.
For each of these versions, a sketch consists of two subsketches: a Slim-subsketch, and a Multi-counter subsketch which contains multiple counters per bucket.
Each counter in the Slim-subsketch corresponds to a bucket in the Multi-counter subsketch.
In each bucket of the Multi-counter subsketch, if we merge all the counters into a single counter by simply adding all the counters in that bucket, we get a standard CM-sketch.
Consequently, the Multi-counter subsketch can be used to assist the deletions from the Slim-subsketch.
The key insight behind this approach is that more counters per-bucket increase the amount of information that can be stored about the frequencies of items, which in turn improves the accuracy.
More details about the SF$_1$ through SF$_5$-sketches and the SF$_\text{F}$ will be presented in Section \ref{sec:final:sketch}.
}

\presub
\subsection{Technical Challenges} \postsub
%
%
%
The first technical challenge is to achieve a significantly higher accuracy compared to the CM-sketch, which is currently the most widely used sketch.
%
To address this challenge, we leverage our novel insight that if we reduce the number of counters that are incremented for each insertion, the accuracy will be improved because the extent of over-estimation error will decrease.
When inserting a new item, our proposed sketch does not always increment $d$ counters in the Slim-subsketch, \textit{rather increments no counter or only the smallest counter(s)} to avoid under-estimation error.
Note that the query is only processed based on the information stored in the Slim-subsketch, which is why we focus on minimizing the number of counter increments per insertion only in the Slim-subsketch.
%
To determine exactly which counters to increment in the Slim-subsketch, our SF-sketch makes use of the Fat-subsketch, which enables it to estimate the number of times the item has already been inserted.
SF-sketch then either increments only the smallest counters in the Slim-subsketch if the value of the smallest counters is less than this estimated value, or increments no counter at all.

The second technical challenge is to enable Slim-subsketch to support deletions.
It is very difficult to achieve accurate deletions from the Slim-subsketch because to support deletions, one needs to keep track of exactly which counters were incremented when each item was inserted.
This information is required to identify the appropriate counters to decrement when deleting an item and to identify the influence of those decrements on other items.
Tracking such information is very expensive, both in terms of memory overhead and computational cost.
To address this challenge, instead of achieving accurate deletions, \ie, decrementing all those counters that were incremented at the time of inserting the given item, we achieve approximate deletions, \ie, decrementing as many counters in the Slim-subsketch as possible without causing any under-estimation errors.

\presub
\subsection{Key Contributions} \postsub
\begin{enumerate}
\item We propose a new sketch, namely the SF-sketch, which has higher accuracy compared to the prior art while supporting deletions and keeping the query speed unchanged.
\item We implemented C-sketch, CM-sketch, CU-sketch, CML-sketch and SF-sketch on GPU and multi-core CPU platforms. We carried out extensive experiments on these two platforms to evaluate and compare the performance of all these sketches. Experimental results show that SF-sketch outperforms CM-sketch by up to 33.1 times in terms of average relative error.
\end{enumerate}

\section{Related Work} 
\label{sec:relatedwork}

The structure of the Count sketch (C-sketch) \cite{countsketch} proposed by Charikar \etal is exactly the same as the CM-sketch \cite{cmsketch} described earlier except that each array $\textbf{A}_i$ is associated with two hash functions $h_i(.)$ and $\delta_i(.)$.
%
%
Each hash function $\delta_i(.)$ evaluates to -1 or +1 with equal probability.
%
%
%
%
To insert an item $e$, for all values of $i\in[1, w]$, C-sketch calculates hash functions $h_i(e)$ and $\delta_i(e)$ and adds $\delta_i(e)$ to the counters $A_i[h_i(e)]$.
When querying the frequency of item $e$, C-sketch reports the median of $\{A_1[h_1(e)] \times \delta_1(e), A_2 [h_2 (e)] \times \delta_2(e) \ldots A_d [h_d(e)] \times \delta_d(e)\}$.
%

Unfortunately, C-sketch suffers from both over-estimation and under-estimation errors.
Therefore, several improvements, which do not suffer from the under-estimation errors, have been proposed
such as the CM-sketch \cite{cmsketch}, CU-sketch \cite{cusketch}\footnote{Estan and Varghese proposed the CU-sketch which can be combined with other sketches. For convenience, CU-sketch means CM-CU-sketch when it is combined with CM sketch.}, and Count-Min-Log (CML) sketch \cite{cmlog}.
These three sketches all have $d$ arrays of $w$ counters each.
To insert an item $e$, CU-sketch \cite{cusketch} increments only the smallest counter(s) among the $d$ hashed counters.
Although CU-sketch improves the query accuracy significantly, its fundamental limitation is that it does not support deletions, and consequently it has not received as wide acceptance in practice as the CM-sketch.
%
%
%
%
CML-sketch is another variant of the CM-sketch that uses logarithm-based approximate counters instead of linear counters \cite{cmlog}.
Instead of incrementing one counter per array per insertion, it decides whether or not to increment the counters each time with logarithmic probabilities.
This helps in reducing the number of bits for each counter, which in turn allows the sketch to have more counters in the same amount of memory and thus achieve better accuracy.
Unfortunately, CML-sketch suffers from both over-estimation and under-estimation errors, and its final version does NOT support deletions. Thorough statistical analysis of various sketches is provided in \cite{sketchtheorysigmod,sketchtheorytods}.

A recent work presented Augment sketch (A-sketch), which is a universal framework that can be applied to many existing sketches, especially to those with low accuracy \cite{asketch}.
A-sketch uses a filter to catch heavy hitters (high-frequency items) earlier, and uses classical sketches (such as CM-sketch and C-sketch) to store and query the rest items.
In this way, the accuracy could be improved.
However, always keeping the most frequent items in the first filter without incurring additional errors is a challenging issue.
Complex design and frequent communications between the two filters are unavoidable, making the implementation complicated.
Indeed, A-sketch can be applied to our SF-sketch as well.
However, according to our test results, as our SF-sketch is already very accurate, combining A-sketch with SF-sketch brings little increase in accuracy but does bring more complexity.
%
%

Another class of data structures that can be used to store frequencies of items are the enhanced Bloom filters, such as Spectral Bloom Filters (SBF) \cite{spectralBF}, Dynamic Count Filters (DCF) \cite{dynamicCF}, and more \cite{shbf}, which indeed can estimate frequencies of items.
SBF replaces each bit in the conventional Bloom filter with a counter \cite{spectralBF}.
To insert an item, the basic version of SBF simply increments all the counters that the item maps to.
When querying the frequency of an item, SBF returns the value of the smallest counter(s) among all the counters to which the hash functions map the item to as the estimate of the frequency of that item in the multiset.
%
%
DCF extends the concept of SBF while improving the memory efficiency of SBF by using two separate filters \cite{dynamicCF}.
The first filter is comprised of fixed size counters while the size of counters in the second filter is dynamically adjusted.
The use of two filters, unfortunately, increases the complexity of DCF, which degrades its query and update speeds.

	\presec
\section{The Slim-Fat Sketch} \postsec
\label{sec:mcsketch}
In this section, we present the details of our SF-sketch.
To better explain the intuition at work behind the SF-sketch and to justify the design choices we made in developing the SF-sketch, we will start with a basic version and improve it incrementally to arrive at the final design.
For each intermediate version of the SF-sketch that we develop while working our way towards the final design, we will first describe its insertion, query, and deletion operations.
After that we will discuss its limitations, which will guide us in making our design choices for the next version.
In this process, we will present five different versions of SF-sketch, which we name SF$_1$-sketch through SF$_\text{4}$-sketch, and finally SF$_\text{F}$-sketch, which is our final design.
Each version is developed by studying the limitations of its predecessor version and addressing them.
%
%

\begin{figure}[htbp]
\centering
\includegraphics[width=0.45\textwidth]{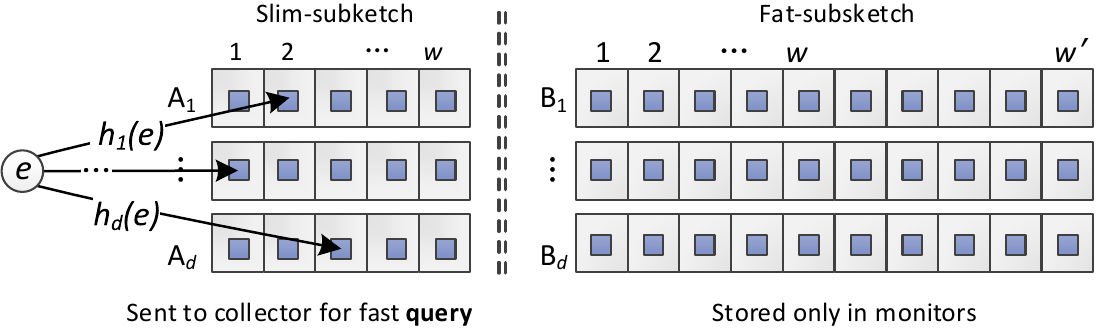}
\prefigcaption
\caption{The Slim-Fat sketch architecture.}
\label{draw:fat-sketch}
\postfig \postfig
\end{figure}

\noindent\textbf{Rationale:}
In our slim-fat architecture (shown in Figure \ref{draw:fat-sketch}), there is a set of arrays with fewer counters per array called a Slim-subsketch, and a set of arrays with comparatively more counters per array called a Fat-subsketch.
When inserting or deleting an item, we first update the Fat-subsketch, and then update the Slim-subsketch based on the observations we make from the Fat-subsketch.
The key insight at work behind our proposed scheme is that, \textcolor{blue}{\textit{when inserting an item,
if the smallest one of the $d$ hashed counters is already bigger than its current true frequency, then incrementing any counter only degrades the accuracy.} }
%
%
\textcolor{blue}{As the true accuracy is not easy to obtain using small memory, we use Fat-subsketch to give a relatively accurate estimate of the current true frequency.}
%
Next, we start with the first version of our slim-fat sketch, \ie, the SF$_1$-sketch, and discuss its operations and limitations, which will pave the way towards the design of SF$_2$-sketch and its subsequent versions.
Table~\ref{table:symbols} summarizes the symbols and abbreviations used in this paper.

\begin{table}[htbp]
\centering
\caption{Symbols \& abbreviations used in the paper}
\begin{tabular}{|c|l|}
\hline
\textbf{Symbol}&\textbf{Description}\\
\hline
$e$& Any item that can be handled by SF$_i$-sketch\\
\hline
$d$& \# of arrays in Slim-subsketch and Fat-subsketch\\
\hline
\multirow{2}{*}{$z$} & \# of counters in each bucket of Fat-subsketch of \\ & SF$_4$- and SF$_\text{F}$-sketch\\
\hline
\multirow{2}{*}{$w / w'$}& \# of counters or buckets in each array of \\& Slim- / Fat-subsketch\\
\hline
$\textbf{A}_i$& the $i^\text{th}$ array in the Slim-subsketch of SF$_i$-sketch\\
\hline
$\textbf{B}_i$& the $i^\text{th}$ array in the Fat-subsketch of SF$_i$-sketch\\
\hline
\multirow{2}{*}{$\textbf{C}_i$}& the $i^\text{th}$ array in the Deletion-subsketch\\ & used in SF$_2$-sketch\\
\hline
\multirow{2}{*}{$h_i(e)$} & the $i^{th}$ hash function used in Slim- and \\& Deletion-subsketch\\
\hline
$g_i(e)$& the $i^{th}$ hash function used in Fat-subsketch\\
\hline
\multirow{2}{*}{$B_e^{\min}$}&  the minimum value among all counters in \\&  \{$\textbf{B}_i \lvert 1\leqslant i\leqslant d$\} \\
\hline
$\%$& mod operation\\
\hline
\end{tabular}
\label{table:symbols}
\vspace{-0.13in}
\end{table}

\presub
\subsection{SF$_1$: Optimizing Accuracy Using One fat-Subsketch}
\postsub
\label{sec:SF1}

As shown in Figure \ref{draw:fat-sketch}, SF$_1$-sketch consists of $d$ arrays in both the Slim-subsketch and the Fat-subsketch.
\textit{The Fat-subsketch is exactly a standard CM-sketch with many more counters than the Slim-subsketch.}
We represent the $i^\text{th}$ array in the Slim-subsketch with $\textbf{A}_i$ and in the Fat-subsketch with $\textbf{B}_i$.
Each array in the Slim-subsketch consists of $w$ buckets while each array in the Fat-subsketch consists of $w'$ buckets, where $w'>w$.
Furthermore, each bucket in both Slim and Fat-subsketches contains one counter.
We represent the counter in the $j^\text{th}$ bucket of the $i^\text{th}$ array in the Slim-subsketch with $A_i[j]$, where $1\leqslant i\leqslant d$ and $1\leqslant j\leqslant w$.
Similarly, we represent the counter in the $k^\text{th}$ bucket of the $i^\text{th}$ array in the Fat-subsketch with $B_i[k]$, where $1\leqslant i\leqslant d$ and $1\leqslant k\leqslant w'$.
Each array $\textbf{A}_i$ is associated with a uniformly distributed independent hash function $h_i(.)$, where the output of $h_i(.)$ lies in the range $[1, w]$.
Similarly, each array $\textbf{B}_i$ is associated with a uniformly distributed independent hash function $g_i(.)$, where the output of $g_i(.)$ lies in the range $[1, w']$.
The structure of the SF$_1$-sketch is shown in Figure \ref{draw:fat-sketch}.
The \textbf{initialization} operation for the SF$_1$-sketch is simply setting all counters $A_i[j]$ and $B_i[k]$ to zero, where $1\leqslant i\leqslant d$, $1\leqslant j\leqslant w$, and $1\leqslant k\leqslant w'$.
%
%

\noindent\textbf{Insertion: }
When inserting an item, the SF$_1$-sketch first inserts it into the Fat-subsketch, and based on the observations made from the Fat-subsketch, increments appropriate counters in the Slim-subsketch.
\textit{The insertion operation in the Fat-subsketch is exactly the same as the conventional CM-sketch.}
To insert an item $e$ into the Fat-subsketch, we first compute the $d$ hash functions $g_1(e), g_2(e), \ldots, g_d(e)$ and increment the $d$ hashed counters $B_1[g_1(e)], B_2[g_2(e)], \ldots, B_d[g_d(e)]$ by 1.
After inserting the item, we estimate its current frequency of $e$ by finding the minimum value among the $d$ hashed counters we just incremented and represent it with $B_e^{\min}$.
To insert the item $e$ into the Slim-subsketch, we compute the $d$ hash functions and identify the smallest counter(s) among the $d$ hashed counters $A_1[h_1(e)], A_2[h_2(e)], \ldots, A_d[h_d(e)]$. If the value of the smallest counter(s) are not smaller than $B_e^{\min}$, insertion operation ends.
Otherwise, we increment the smallest counter(s) by 1.
Note that CU-sketch always increments the smallest counter(s).
Thus SF$_1$-sketch is much more accurate than CU-sketch.
%
%
In other words, $\forall l\in[1,d]$, SF$_1$-sketch increment all counters $A_l[h_l(e)]$ by one that satisfy the following two conditions: $A_l[h_l(e)] = \min_{i=1}^{d} A_i[h_i(e)]$, and $ A_l[h_l(e)]<B_e^{\min}$.

\noindent\textbf{Query: }
When querying the frequency of item $e$, the SF$_1$-sketch computes the $d$ hash functions $h_1(e), h_2(e), \ldots, h_d(e)$, and returns the value of the smallest counter among $A_1[h_1(e)], A_2[h_2(e)],
\ldots, A_d[h_d(e)]$ as the result of the query.
Note that the query is answered only from the Slim-subsketch.

\noindent\textbf{Deletion: }
SF$_1$-sketch does not support deletions.
%


%
%
%

\noindent\textbf{Advantages and Limitations: }
The key advantage of the SF$_1$-sketch is that to answer a query it does not access the Fat-subsketch, but only accesses the Slim-subsketch, which keeps the query speed of this sketch as fast as the conventional CM-sketch.
Furthermore, note that during the insertion operation, we either increment no counters or increment only the smallest counter(s) in the Slim-subsketch.
%
The smallest counter in the Fat-subsketch gives the upper bound on the number of times that a given item has already been inserted.
This strategy reduces the number of increments in the Slim-subsketch, which has two advantages.
First, it reduces the memory footprint of the Slim-subsketch on the expensive and limited memory.
Second, due to fewer increments, the over-estimation error is reduced.
Unfortunately, the biggest limitation of the SF$_1$-sketch is that it does not support deletions from the Slim-subsketch.
While the Fat-subsketch assists the Slim-subsketch during insertion operation, it cannot assist in the deletion operation because the numbers of counters per array in the Fat- and Slim-subsketches are not the same.
This inability to support deletions from the Slim-subsketch limits the practical usability of the SF$_1$-sketch.
In the next version of our SF-sketch, \ie, the SF$_2$-sketch, we address this limitation while keeping the advantages of the SF$_1$-sketch.

\presub \presub
\subsection{SF$_2$: Supporting Deletion Deletion-subsketch}
\postsub

\begin{figure}[htbp]
\centering
\includegraphics[width=0.28\textwidth]{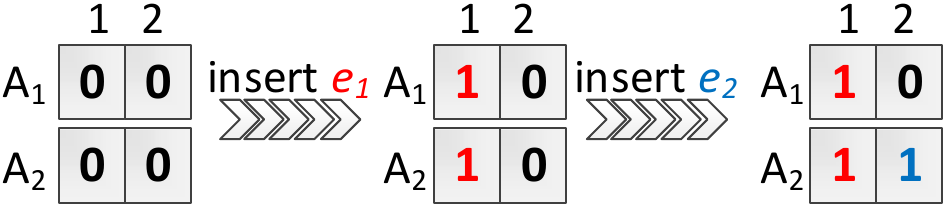}
\postfig
\caption{An example of the deletion problem.}
\label{draw:deletionsample}
\postfig
\end{figure}

\noindent\textbf{Difficulties for deletions:}
It is challenging to achieve accurate deletions in SF$_1$-sketch because to delete items from the Slim-subsketch of SF$_1$-sketch, one has to keep track of exactly which counters were incremented when inserting each item.
Such tracking is difficult and requires large memory and processing overhead.
We explain this with help of an example.
As shown in Figure \ref{draw:deletionsample}, consider a Slim-subsketch that has two arrays and two counters per array, where all counters are initialized to 0.
Let us first insert two items $e_1$ and $e_2$ and then delete the item $e_1$.
Furthermore, let $e_1$ maps to \textbf{A}$_1[1]$ and \textbf{A}$_2[1]$ and $e_2$ maps to \textbf{A}$_1[1]$ and \textbf{A}$_2[2]$.
In inserting $e_1$, we increment \textbf{A}$_1[1]$ and \textbf{A}$_2[1]$ both to 1.
After that, in inserting $e_2$, as the current value of \textbf{A}$_1[1]$ is 1 and \textbf{A}$_2[2]$ is 0, we only increment the smaller of the two, \ie, \textbf{A}$_2[2]$ to 1.
At this point, \textbf{A}$_1[1]=1$, \textbf{A}$_1[2]=0$, \textbf{A}$_2[1]=1$, and \textbf{A}$_2[2]=1$.
In deleting $e_1$, as $e_1$ maps to both \textbf{A}$_1[1]$ and \textbf{A}$_2[1]$ and as both were incremented at the time of inserting $e_1$, if we decrement them both, the query result of $e_2$ will be 0, \ie, an under-estimation error occurs, which we do not want in our SF-sketch.

\noindent\textbf{Deletion-subsketch:} To support deletions, in addition to one Slim-subsketch and one Fat-subsketch just like in the SF$_1$-sketch, the SF$_2$-sketch maintains another sketch, called the Deletion-subsketch.
The Deletion-subsketch is essentially a standard CM-sketch.
Unlike Fat-subsketch, all the parameters ($d$, $w$, $h_i(.)$) of the Deletion-subsketch and the Slim-subsketch are exactly the same.
For the Deletion-subsketch, we represent the counter in the $j^\text{th}$ bucket of the $i^\text{th}$ array with $C_i[j]$, where $1\leqslant i\leqslant d$ and $1\leqslant j\leqslant w$.
Note that the Fat-subsketch helps in deciding which counters to increment in the Slim-subsketch while inserting an item, whereas the Deletion-subsketch helps in deciding which counters to decrement in the Slim-subsketch when deleting an item.
The \textbf{initialization} operation for the SF$_2$-sketch consists of simply setting all counters $A_i[j]$, $B_i[k]$, and $C_i[j]$, to 0 ($1\leqslant i\leqslant d$, $1\leqslant j\leqslant w$, and $1\leqslant k\leqslant w'$.)

\noindent\textbf{Insertion: }
The insertion operation of the SF$_2$-sketch for the Slim- and Fat-subsketches is exactly the same as that of the SF$_1$-sketch, except that for the SF$_2$-sketch, we also add information about the incoming item to the Deletion-subsketch.
Specifically, to insert an item $e$ into the Deletion-subsketch, we compute $d$ hash functions and increment the $d$ hashed counters $C_1[h_1(e)], C_2[h_2(e)], \ldots, C_d[h_d(e)]$ by 1.

\noindent\textbf{Query: }
The query operation of the SF$_2$-sketch is exactly the same as the SF$_1$-sketch.

\noindent\textbf{Deletion: }
To delete an item $e$ from the SF$_2$-sketch, we first delete it from the Fat-subsketch by decrementing the $d$ counters $B_1[g_1(e)], B_2[g_2(e)], \ldots, B_d[g_d(e)]$ by 1 and then delete it from the Deletion-subsketch
by decrementing the $d$ counters $C_1[h_1(e)], C_2[h_2(e)], \ldots, C_d[h_d(e)]$ by 1.
Finally, we delete it from the Slim-subsketch.
%
We leverage the fact that before deleting the item from the Deletion-subsketch, \textit{each counter in the Slim-subsketch is always less than or equal to the corresponding counter in the Deletion-subsketch}, because when inserting an item, even if a counter in the Slim-subsketch to which the incoming item maps to is not incremented, the corresponding counter in the Deletion-subsketch is always incremented.
To delete the item $e$ from the Slim-subsketch, for each $i\in[1, d]$, we compare $A_i[h_i(e)]$ with $C_i[h_i(e)]$ and decrement $A_i[h_i(e)]$ by 1 only when $A_i[h_i(e)] > C_i[h_i(e)]$.

\noindent\textbf{Advantages and Limitations: }
The SF$_2$-sketch is advantageous over the SF$_1$-sketch because it supports deletions.
However, it is not efficient in terms of memory usage and update speed because it has to maintain an additional sketch, the Deletion-subsketch, to support deletions from the Slim-subsketch.
In the next version of the SF-sketch, \ie, the SF$_3$-sketch, we address this limitation while keeping the advantages of both SF$_1$- and SF$_2$-sketches.

\presub \vspace{-0.05in}
\subsection{SF$_3$: Combining Fat-subsketch with Deletion-subsketch}
\postsub

In SF$_3$-sketch, we get rid of the separate Deletion-subsketch, and modify the Fat-subsketch so that, in addition to insertions, it can assist deletions in the Slim-subsketch.
The Fat-subsketch in the SF$_3$-sketch is similar to the Fat-subsketch in the SF$_1$- and SF$_2$-sketches.
However, in the Fat-subsketch of SF$_3$-sketch, the number of buckets in each array is given by $w' = z\times w$, where $z$ is a positive integer.
In other words, the Fat-subsketch consumes $z$ times as much memory as the Slim-subsketch.
The structure of the Slim-subsketch in the SF$_3$-sketch is exactly the same as the Slim-subsketches in the SF$_1$- and SF$_2$-sketches.
However, the hash functions $h_i(.)$, where $1\leqslant i\leqslant d$, associated with the Slim-subsketch are now derived from the hash functions $g_i(.)$, where the output of $g_i(.)$ lies in the range $[1, z\times w]$.
More specifically,

\vspace{-0.1in}
\begin{equation}
h_i(.)=\Big(g_i(.)-1\Big)\%w+1
\label{equ:fx}
\end{equation} \vspace{-0.1in}

%

\noindent Consequently, the value of the hash function $h_i(.)$ always lies in the range $[1, w]$, where $w$ is the number of buckets per array in the Slim-subsketch.
Note also that calculating the hash function $h_i(.)$ from the hash function $g_i(.)$ using the equation above essentially \emph{associates} each counter $A_i[j]$ in the Slim-subsketch with $z$ counters $B_i[j], B_i[j+w], B_i[j+2w], \ldots, B_i[j+(z-1)w]$ in the Fat-subsketch.
Every time a counter in the Slim-subsketch is incremented, it is certain that one of its \emph{associated} $z$ counters in the Fat-subsketch is also incremented.
%
This further means that the value of a counter in the Slim-subsketch will always be less than or equal to the sum of values of all its associated counters in the Fat-subsketch.
%
%

\noindent\textbf{Insertion: }
When inserting an item $e$, the SF$_3$-sketch first inserts it into the Fat-subsketch.
For this we compute the $d$ hash functions $g_1(e), g_2(e), \ldots, g_d(e)$ and increment the $d$ counters $B_1[g_1(e)], B_2[g_2(e)], \ldots, B_d[g_d(e)]$ by 1.
Next, we find the minimum value among all these $d$ counters and represent it with $B_e^{\min}$.
To insert the item $e$ into the Slim-subsketch, we first compute the $d$ hash functions $h_1(e), h_2(e), \ldots, h_d(e)$ using Equation \ref{equ:fx} and then increment all counters $A_l[h_l(e)]$ that satisfy the conditions $A_l[h_l(e)]<B_e^{\min}$ and  $A_l[h_l(e)] = \min_{i=1}^{d} A_i[h_i(e)]$, where $l\in[1,d]$.
Note that if $\min_{i=1}^{d} A_i[h_i(e)]\geqslant B_e^{\min}$, we do nothing.
%

\noindent\textbf{Query: }
The query operation of SF$_3$-sketches is exactly the same as that of SF$_1$- and SF$_2$-sketches.

\noindent\textbf{Deletion: }
To delete an item from the SF$_3$-sketch, we first delete it from the Fat-subsketch and then from the Slim-subsketch.
To delete the item $e$ from the Fat-subsketch, we first calculate the $d$ hash functions $g_1(e), g_2(e), \ldots, g_d(e)$ and then decrement the $d$ counters $B_1[g_1(e)], B_2[g_2(e)], \ldots, B_d[g_d(e)]$ by 1.
To delete the item $e$ from the Slim-subsketch, we leverage the fact stated earlier that before deleting the item from the Fat-subsketch, \textit{the value of a counter in the Slim-subsketch is always less than or equal to the sum of values of all its associated counters in the Fat-subsketch}, because when inserting an item, even if a counter in the Slim-subsketch is not incremented, one of the associated counters in the Fat-subsketch is always incremented.
To delete the item $e$ from the Slim-subsketch, after deleting it from the Fat-subsketch, for each $i\in[1, d]$, we compare $A_i[h_i(e)]$ with $\sum_{m=0}^{z-1}B_i[h_i(e)+(m\times w)]$ and decrement $A_i[h_i(e)]$ by 1 if $A_i[h_i(e)] > \sum_{m=0}^{z-1}B_i[h_i(e)+(m\times w)]$.
Note that each value of $h_i(e)$ is calculated using Equation \eqref{equ:fx}.

\noindent\textbf{Advantages and Limitations: }
The advantage of SF$_3$-sketch over the SF$_2$-sketch is that it does not have to maintain a separate Deletion-subsketch.
Unfortunately, it is not efficient in terms of deletion speed because to delete an item, it needs $d\times z$ memory accesses to add the counters in each array of the Fat-subsketch.
In the next version of our SF-sketch, \ie, the SF$_4$-sketch, we address this limitation while keeping the advantages of all three previous versions of the SF-sketch.
\balance

\presub
\subsection{SF$_4$: Improving Deletion Speed}
\postsub

\begin{figure}[htbp]
\centering
\includegraphics[width=0.48\textwidth]{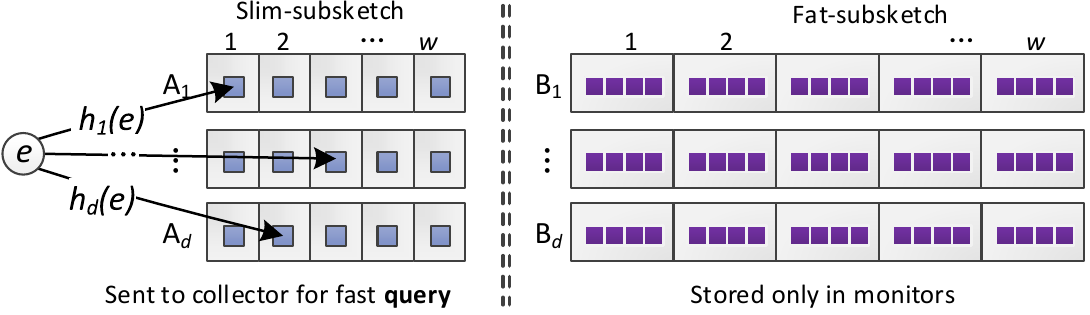}
\prefigcaption \postfig \postfig
\caption{The SF$_4$- and \& SF$_\text{F}$-sketch architecture.}
\label{draw:MCsketch}
\postfig
\end{figure}

In SF$_4$-sketch, we modify the Fat-subsketch so that instead of each bucket having one counter, each bucket has $z$ counters.
As shown in Figure \ref{draw:MCsketch}, in the Fat-subsketch of the SF$_4$-sketch, we have $d$ arrays with $w' = w$ buckets each, and each bucket now contains $z$ counters instead of one counter.
We represent the $k^\text{th}$ counter in the $j^\text{th}$ bucket of the $i^\text{th}$ array in the Fat-subsketch with $B_i[j][k]$, where $1\leqslant i\leqslant d$, $1\leqslant j\leqslant w$, and $1\leqslant k\leqslant z$.
Each array $\textbf{B}_i$ in the Fat-subsketch is associated with two uniformly distributed independent hash functions: $h_i(.)$ with output in the range $[1, w]$, which maps an item to a bucket in the $i^\text{th}$ array, and $f_i(.)$ with output in the range $[1, z]$, which maps an item to a counter inside the bucket $B_i[h_i(.)]$ of the $i^\text{th}$ array.
The Slim-subsketch uses the same has functions $h_i(.)$ as the Fat-subsketch to map items to buckets.
Every time a counter in the Slim-subsketch is incremented, it is certain that one of the counters among the $z$ counters in the corresponding bucket of the Fat-subsketch is also incremented.
This means that the value of a counter in the Slim-subsketch will always be less than or equal to the \textit{\textbf{sum}} \textit{of the values of all counters in the corresponding bucket} in the Fat-subsketch.
%
%
%

\noindent\textbf{Insertion: }
When inserting an item, the SF$_4$-sketch first inserts it into the Fat-subsketch, and based on the observations it makes from the Fat-subsketch, increments appropriate counters in the Slim-subsketch.
Specifically, to insert an item $e$ into the Fat-subsketch, we first compute $d$ hash functions $h_1(e), h_2(e), \ldots, h_d(e)$ and another $d$ hash functions $f_1(e), f_2(e), \ldots, f_d(e)$ and increment the $d$ counters $B_1[h_1(e)][f_1(e)]$, $B_2[h_2(e)][f_2(e)]$, $\ldots$, $B_d[h_d(e)][f_d(e)]$ by 1.
Next, we find the minimum value among all counters we just incremented and represent it with $B_e^{\min}$.
To insert the item $e$ into the Slim-subsketch, we identify the counters with the smallest value among the $d$ counters $A_1[h_1(e)], A_2[h_2(e)], \ldots, A_d[h_d(e)]$ and increment them by 1 only if their values are less than $B_e^{\min}$.
In other words, we increment all counters $A_l[h_l(e)]$ by one that satisfy the conditions $A_l[h_l(e)] = \min_{i=1}^{d} A_i[h_i(e)]$ and $ A_l[h_l(e)]<B_e^{\min}$, where $l\in[1,d]$.
If $\min_{i=1}^{d} A_i[h_i(e)]\geqslant B_e^{\min}$, we do nothing.
%
%
%

\noindent\textbf{Query: }
The query operation of the SF$_4$-sketch is exactly the same as the SF$_1$-, SF$_2$-, and SF$_3$-sketches.

\noindent\textbf{Deletion:}
To delete an item from the SF$_4$-sketch, we first delete it from the Fat-subsketch and then from the Slim-subsketch.
To delete the item $e$ from the Fat-subsketch, we first calculate the $d$ hash functions $h_1(e), h_2(e), \ldots, h_d(e)$ and another $d$ hash functions $f_1(e), f_2(e), \ldots, f_d(e)$ and decrement the $d$ counters $B_1[h_1(e)][f_1(e)]$, $B_2[h_2(e)][f_2(e)]$, $\ldots$, $B_d[h_d(e)][f_d(e)]$ by 1.
To delete the item $e$ from the Slim-subsketch, we leverage the fact stated earlier that before deleting the item from the Fat-subsketch, \textit{the value of a counter in the Slim-subsketch is always less than or equal to the sum of values of all counters in the corresponding bucket in the Fat-subsketch}.
%
To delete the item $e$ from the Slim-subsketch, after deleting it from the Fat-subsketch, for each $i\in[1, d]$, we compare $A_i[h_i(e)]$ with $\sum_{k=1}^{z}B_i[h_i(e)][k]$ and decrement counter $A_i[h_i(e)]$ by 1 if $A_i[h_i(e)] > \sum_{k=1}^{z}B_i[h_i(e)][k]$.
Therefore, one deletion from the Fat-subsketch only needs $d\times z\times b/W$ memory accesses, where $b$ is the number of bits of each counter, $W$ is the size of the \texttt{machine word}, and $b<W$.

\noindent\textbf{Advantages and Limitations: }
The principles behind the SF$_4$-sketch and the SF$_3$-sketch are essentially the same.
The advantage SF$_4$-sketch has over SF$_3$-sketch is that all counters in the Fat-subsketch corresponding to a counter in the Slim-subsketch are now located in the same bucket.
Thus, adding the values of the $z$ counters usually only takes a single memory access.
Based on SF$_4$-sketch, our final version SF$_\text{F}$-sketch aims to minimize the over-estimation error.

\presub
\subsection{SF$_\text{F}$: Reducing Over-estimation
Error (\textbf{The Final Version})}
\label{sec:final:sketch}
The structure of the SF$_\text{F}$-sketch is exactly the same as SF$_4$-sketch.
%
%
%
The key idea behind the SF$_\text{F}$-sketch is that in updating the counters in the Slim-subsketch, we keep the value of each counter in the Slim-subsketch always \emph{\textbf{less than or equal to}} the value of the \textit{\textbf{largest} counter in the corresponding bucket} of the Fat-subsketch during insertion and deletion operations.
%
%
Next, we describe how insertion, deletion, and query operations work in SF$_\text{F}$-sketch followed by an analysis of its error and accuracy.

\noindent\textbf{Insertion: }
The insertion operation of  the SF$_\text{F}$-sketch is exactly the same as the insertion operation of the SF$_4$-sketch.
%

\noindent\textbf{Query: }
The query operation of the SF$_\text{F}$-sketch is exactly the same as the previous versions of the SF-sketch.

\noindent\textbf{Deletion:}
To delete an item from the SF$_\text{F}$-sketch, we first delete it from the Fat-subsketch and then from the Slim-subsketch.
%
%
To delete an item $e$ from the Slim-subsketch, we first check the $d$ buckets $B_1[h_1(e)], B_2[h_2(e)]...B_d[h_d(e)]$.
For each $i\in[1, d]$, if $\max_{k=1}^{z}B_i[h_i(e)][k]$ changes when deleting item $e$ from the Fat-subsketch, we set $A_i[h_i(e)] = \max_{k=1}^{z}B_i[h_i(e)][k]$ if $A_i[h_i(e)]>\max_{k=1}^{z}B_i[h_i(e)][k]$.
Otherwise, we leave the value of $A_i[h_i(e)]$ unchanged.
The key difference between the deletion operation of SF$_\text{F}$-sketch and SF$_4$-sketch is that in SF$_\text{F}$-sketch, we compare the value of $A_i[h_i(e)]$ with $\max_{k=1}^{z}B_i[h_i(e)][k]$ instead of $\sum_{k=1}^{z}B_i[h_i(e)][k]$, which results in significantly reducing the values of counters in the Slim-subsketch.
%
%

\noindent\textbf{Advantages:}
The key advantage of SF$_\text{F}$-sketch over SF$_4$-sketch is that during deletion operation, it significantly reduces the counter values in the Slim-subsketch because in SF$_\text{F}$-sketch, we compare the values of counters in the Slim-subsketch with the values of the largest counters in the corresponding buckets of the Fat-subsketch instead of comparing them with the sum of the values of all counters in the corresponding buckets of the Fat-subsketch.
This significantly reduces the over-estimation error of SF$_\text{F}$-sketch.
Note that SF$_\text{F}$-sketch does not suffer from under-estimation error.
%

\Comment{
\subsubsection{Analysis}
Next, we first prove that SF$_\text{F}$-sketch does not suffer from under-estimation error.
After that we derive the bound on the over-estimation error of SF$_\text{F}$-sketch.

\paragraph{Proof of No Under-estimation Error}
Below we prove by induction the hypothesis that the SF$_\text{F}$-sketch doesn't incur underestimation error, which means if an error occurred by SF$_\text{F}$-sketch, it must be an overestimation error.
When querying an item, an error happens if the query result gives a different value from the real value of the frequency of an item.
Here, we define some notations for convenience before the proof.
Becasue querying cannot change any counter in SF$_\text{F}$-sketch, we just consider updating operations which consist of insertions and deletions to change the state of a SF$_\text{F}$-sketch.
Here, we use $Q^{n}(e)$ to stand for querying an item $e$ from SF$_\text{F}$-sketch after $n~(n \in \mathbb{N}^{0})$ updating operations, $SQ^{n}(e)$~(is the same as $Q^{n}(e)$) for querying an item from the Slim-subsketch of the SF$_\text{F}$-sketch, $FQ^{n}(e)$ for querying an item from the Multi-counter subsketch of the SF$_\text{F}$-sketch, $F^{n}(e)$ for the real frequency of an item $e$.
We also use $A^{n}_{l},~B^{n}_{l}~(l\in[1, d])$ to stand for the state of Slim-subsketch and Multi-counter subsketch after $n$ updating operations.
Knowing that the Multi-counter subsketch of the SF$_\text{F}$-sketch is a CM-sketch, we directly use the conclusion in \cite{cmsketch} that CM-sketch don't suffer from underestimation error, which is $\forall n \in \mathbb{N}^{0}~\forall e,~FQ^{n}(e) \geqslant F^{n}(e)$.

\begin{thrm}
\label{thm:no_under}
For updates consisting of insertions and deletions, the SF$_\text{F}$-sketch does not incur underestimation error:
\begin{equation}
\label{equ:no_under}
\forall n \in \mathbb{N}^{0}~\forall e,~Q^{n}(e) \geqslant F^{n}(e)
\end{equation}
\end{thrm}

\begin{proof}

\noindent\textbf{Base case $n=0$:}
    Before any updating operation, each counter in the SF$_\text{F}$-sketch is initially set to 0. Hence $\forall e,~Q^{0}(e)=0 \geqslant F^{0}(e)=0$.

\noindent\textbf{Induction hypothesis:}
    Suppose the theorem holds for all values of $n$ up to some $i~(i\geqslant0)$.

\noindent\textbf{Induction step:}
    Let $n=i+1$. Suppose the item that we are going to insert/delete in the ($i+1$)th update is $e'$, where $e'$ can be any item.
    $\forall e$~(Note that the inductive step holds when $e'=e$ as well as when $e'\neq e$)~$\exists l'\in[1...d]$,
    \begin{equation}
    \begin{aligned}
    Q^{i+1}(e) &= SQ^{i+1}(e) = \min_{l:1...d}A^{i+1}_l[h_l(e)]\\
    &= A^{i+1}_{l'}[h_{l'}(e)]
    \end{aligned}
    \label{equ:thm_1}
    \end{equation}

\textbf{Insertion case:}
    For each counter in Slim-subsketch, after insertion operation introduced in Subsection~\ref{sec:final:sketch}, it may be unchanged or incremented by 1.
    If $h_{l'}(e)=h_{l'}(e')$ and $A^{i}_{l'}[h_{l'}(e')] < \min_{l=1}^{d}B^{i+1}_l[h_l(e')][f_l(e')]$, we have:
    \begin{equation}
    \begin{aligned}
        A^{i+1}_{l'}[h_{l'}(e)]&=A^{i}_{l'}[h_{l'}(e)]+1 \geqslant \min_{l:1...d}A^{i}_l[h_l(e)]+1\\
        &= SQ^i(e)+1 = Q^i(e)+1\\
        &\geqslant F^i(e)+1 \geqslant F^{i+1}(e)
    \end{aligned}
    \label{equ:thm_ins_h}
    \end{equation}
    Otherwise, we have:
    \begin{equation}
        A^{i+1}_{l'}[h_{l'}(e)] = A^{i}_{l'}[h_{l'}(e)]
    \end{equation}
    If $e = e'$,
        \begin{equation}
        \begin{aligned}
            A^{i}_{l'}[h_{l'}(e)] &\geqslant \min_{l:1...d}B^{i+1}_l[h_l(e')][f_l(e')]\\
            &= FQ^{i+1}(e) \geqslant F^{i+1}(e)
        \end{aligned}
        \end{equation}
    If $e \neq e'$,
        \begin{equation}
            A^{i}_{l'}[h_{l'}(e)] \geqslant \min_{l:1...d}A^{i}_l[h_l(e_{i})] = SQ^i(e) \geqslant F^i(e) = F^{i+1}(e)
        \label{equ:thm_ins_t}
        \end{equation}
    From (\ref{equ:thm_1}), (\ref{equ:thm_ins_h}) $\sim$ (\ref{equ:thm_ins_t}), when the updating is insertion, we can get:
    \begin{equation}
        \forall e~s.t.~(i+1)th~operation~is~insertion,~Q^{i+1}(e) \geqslant F^{i+1}(e)
    \label{equ:thm_ins}
    \end{equation}

\textbf{Deletion case:}
    According to deletion process in~\ref{sec:final:sketch}, each counter in Slim-subsketch may be unchanged or be set to $max_{k=1}^{z}B^{i+1}_{l}[h_l(l_{i+1})][k]$ after the $(i+1)$th updating operation.
    For each $l \in [1...d]$, if $h_{l}(e)=h_{l}(e')$ and $A^{i}_{l}[h_{l}(e')] > \max_{k=1}^{z}B^{i+1}_l[h_l(e')][k]$, we have:
    \begin{equation}
    \begin{aligned}
        A^{i+1}_{l}[h_{l}(e)]&=\max_{k:1...z}B^{i+1}_l[h_l(e')][k]\\
        &\geqslant B^{i+1}_l[h_l(e')][f_l(e)]\\
        &\geqslant \min_{j:1...d}B^{i+1}_j[h_j(e)][f_j(e)]\\
        &= FQ^{i+1}(e) \geqslant F^{i+1}(e)
    \end{aligned}
    \label{equ:thm_del_h}
    \end{equation}
    Otherwise, we have:
    \begin{equation}
    \begin{aligned}
        A^{i+1}_{l}[h_{l}(e)] &= A^{i}_{l}[h_{l}(e)] \geqslant \min_{j:1...d}A^{i}_j[h_j(e)]\\
        & = Q^i(e) \geqslant F^i(e) \geqslant F^{i+1}(e)
    \end{aligned}
    \label{equ:thm_del_t}
    \end{equation}
    From (\ref{equ:thm_1}), (\ref{equ:thm_del_h}) $\sim$ (\ref{equ:thm_del_t}), when the updating is deletion, we can get:
    \begin{equation}
        \forall e~s.t.~(i+1)th~operation~is~deletion,~Q^{i+1}(e) \geqslant F^{i+1}(e)
    \label{equ:thm_del}
    \end{equation}

Combing the result from (\ref{equ:thm_ins}) and (\ref{equ:thm_del}), we can get:
\begin{equation}
    \forall e,~Q^{i+1}(e) \geqslant F^{i+1}(e)
\end{equation}

\noindent\textbf{Conclusion:} By the principle of induction, Theorem~\ref{thm:no_under} is true.
\end{proof}
}


%
\vspace{0.1in}
\noindent\textbf{Bound on Over-estimation Error:}
As a query is entirely answered from the Slim-subsketch, the over-estimation error of SF$_\text{F}$-sketch is actually the over-estimation error of the Slim-subsketch.
Therefore, next, we calculate the over-estimation error of the Slim-subsketch of the SF$_\text{F}$-sketch.
Let $\alpha$ represent the average number of counters in any given array of the Slim-subsketch that are incremented per insertion.
Note that for the standard CM-sketch, the value of $\alpha$ is equal to 1 because in the
standard CM-sketch, exactly one counter is incremented in each array when inserting an item.
For the Slim-subsketch in the SF$_\text{F}$-sketch, $\alpha$ is less than or equal to 1 because the Fat-subsketch helps in reducing the number of counters that are incremented in the Slim-subsketch per insertion.
For any given item $e$, let $f_{(e)}$ represent its actual frequency and let $\hat{f}_{(e)}$ represent the estimate of its frequency returned by the Slim-subsketch of the SF$_\text{F}$-sketch.
Let $N$ represent the total number of insertions of all items into the SF$_\text{F}$-sketch.
Let $h_i(.)$ represent the hash function associated with the $i^{\text{th}}$ array of the Slim-subsketch, where $1\leqslant i\leqslant d$.
Let $X_{i, (e)}[j]$ be the random variable that represents the difference between the actual frequency $f_{(e)}$ of the item $e$ and the value of the $j^\text{th}$ counter in the $i^{\text{th}}$ array, \ie, $X_{i, (e)}[j] = A_i[j]- f_{(e)}$, where $j=h_i(e)$.
Due to hash collisions, multiple items will be mapped by the hash function $h_i(.)$ to the counter $j$, which increases the value of $A_i[j]$ beyond $f_e$ and results in over-estimation error.
As all hash function have uniformly distributed output, $Pr[h_i(e_1) = h_i(e_2)] = 1/w$.
Therefore, the expected value of any counter $A_i[j]$, where $1\leqslant i\leqslant d$ and $1\leqslant j\leqslant w$,
is $\alpha N/w$.
%
Let $\epsilon$ and $\delta$ be two numbers that are related to $d$ and $w$ as follows:
$d = \lceil\ln(1/\delta)\rceil$ and $w = \lceil \exp/\epsilon \rceil$.
The expected value of $X_{i, (e)}[j]$ is given by the following expression.

\begin{equation}
\begin{aligned}
E(X_{i, (e)}[j])&=E(A_i[j]- f_{(e)})\\
&\leqslant E(A_i[j]) \\
&=\frac{\alpha N}{w}
\leqslant \dfrac{\epsilon\alpha}{\exp}N
\end{aligned}
\label{equ:expectedValX}
\end{equation}

\vspace{0.3in}
Finally, we derive the probabilistic bound on the over-estimation error of the Silm-subsketch of the SF$_\text{F}$-sketch.

\begin{IEEEeqnarray*}{rCl}
Pr[\hat{f_{(e)}} \geqslant f_{(e)} + \epsilon\alpha N] &=& Pr[\forall i, A_i[j] \geqslant f_{(e)} + \epsilon\alpha N]\\
&=& (Pr[A_i[j] - f_{(e)} \geqslant \epsilon\alpha N])^{d} \\
&=& (Pr[X_{i, (e)}[j]\geqslant \epsilon\alpha N]) ^ {d} \\
\IEEEeqnarraymulticol{3}{l}{\text{Substituting the value of $\epsilon\alpha N$ from Equation \eqref{equ:expectedValX} into the}}\\
\IEEEeqnarraymulticol{3}{l}{\text{right side of the equation above, we get}}\\
Pr[\hat{f_{(e)}} \geqslant f_{(e)} + \epsilon\alpha N]&\leqslant& (Pr[X_{i, (e)}[j]\geqslant \exp E(X_{i, (e)}[j]))^{d}\\
\IEEEeqnarraymulticol{3}{l}{\text{Applying Markov's Inequaltiy, we get}}\\
Pr[\hat{f_{(e)}} \geqslant f_{(e)} + \epsilon\alpha N]&\leqslant& \exp ^ {-d}\\
&\leqslant& \delta
\end{IEEEeqnarray*}

\Comment{
\begin{equation*}
\begin{aligned}
Pr[\hat{f_{(e)}} \geqslant f_{(e)} + \epsilon\alpha N]
&= Pr[\forall i, A_i[j] \geqslant f_{(e)} + \epsilon\alpha N]\\
&= (Pr[A_i[j] - f_{(e)} \geqslant \epsilon\alpha N])^{d} \\
&= (Pr[X_{i, (e)}[j]\geqslant \epsilon\alpha N]) ^ {d} \\
&\leqslant (Pr[X_{i, (e)}[j]\geqslant \exp E(X_{i, (e)}[j]))^{d}\\
&\leqslant \exp ^ {-d}\\
&\leqslant \delta
\end{aligned}
\end{equation*}
}


\noindent\textbf{Derivation of Correct Rate:}
The \emph{Correct Rate} of a sketch is defined as the expected percentage of items in the given multi-set for which the query response of the sketch contains no error.

%
%
In deriving the correct rate of SF$_\text{F}$-sketch, we make two assumptions: 1) all hash functions are independent; 2) the Fat-subsketch is large enough to have negligible error.
Before deriving the correct rate, we first prove the following theorem.

\begin{thm}
In the Slim-subsketch, the value of any given counter is equal to the frequency of the most frequent item that maps to it.
\end{thm}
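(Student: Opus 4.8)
The plan is to prove the statement by induction on the number of update operations (insertions and deletions) applied to the $\text{SF}_\text{F}$-sketch, writing $f_{(e)}$ for the current frequency of an item $e$ and, for each array $i$ and bucket index $j$, $M_i[j]$ for $\max\{f_{(e)} : h_i(e)=j\}$ (with the empty maximum equal to $0$). The induction hypothesis is exactly the claim: after every update, $A_i[j]=M_i[j]$ for all $i,j$. I would first record how the two standing assumptions are used: since the Fat-subsketch is assumed to incur negligible error and the hashes are independent, during an insertion $B_e^{\min}$ equals the \emph{post-insertion} true frequency of $e$ (not merely an upper bound), and after any update $\max_{k=1}^{z} B_i[h_i(e)][k]$ equals the largest current frequency among the items hashing to bucket $h_i(e)$ of array $i$ — which, because the Slim-subsketch reuses the same $h_i$, is precisely $M_i[h_i(e)]$.

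The base case is immediate: before any update, all counters and all frequencies are $0$. For the inductive step I would fix an update of item $e$ and observe that only the $d$ counters $A_i[h_i(e)]$ and only the $d$ quantities $M_i[h_i(e)]$ can change, so it suffices to re-establish the identity at these positions. A small but crucial consequence of the hypothesis, used repeatedly below, is that $A_i[h_i(e)]=M_i[h_i(e)]\ge f_{(e)}$ for every $i$, i.e.\ no Slim counter underestimates any item hashing to it.

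For an insertion, let $f$ be the frequency of $e$ before the update, so $B_e^{\min}=f+1$. For a fixed array $i$ the hypothesis gives $A_i[h_i(e)]\ge f$. If $A_i[h_i(e)]=f$, then this counter attains $\min_\ell A_\ell[h_\ell(e)]$ — because, by the hypothesis applied to each array $\ell$, every hashed counter is at least $f$ — and satisfies $A_i[h_i(e)]<B_e^{\min}$, so the insertion rule increments it to $f+1$; since its old value $f$ was the old maximum over items hashing there and $e$'s frequency just became $f+1$, the new maximum is $f+1$, matching. If instead $A_i[h_i(e)]>f$, then $A_i[h_i(e)]\ge f+1=B_e^{\min}$, so the counter is untouched, and since the old maximum already exceeded $f$, raising $e$ from $f$ to $f+1$ cannot increase it, so the maximum is unchanged as well. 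For a deletion, let $f\ge 1$ be the old frequency; after removing $e$ from the Fat-subsketch, $\max_{k}B_i[h_i(e)][k]$ becomes the new maximum $M'\le A_i[h_i(e)]$ of items hashing to that bucket. If $M'=A_i[h_i(e)]$, the deletion rule leaves the counter alone and the identity persists; if $M'<A_i[h_i(e)]$, the condition $A_i[h_i(e)]>\max_k B_i[h_i(e)][k]$ holds, the rule fires, and it resets the counter to $M'$, again the new maximum. This closes the induction.

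The main obstacle I anticipate is the insertion case, where one must reason simultaneously about the ``smallest hashed counter'' condition and the ``$<B_e^{\min}$'' condition; the clean resolution hinges on first extracting the auxiliary inequality $A_i[h_i(e)]\ge f_{(e)}$ from the hypothesis and on the ideal-Fat assumption that makes $B_e^{\min}$ the exact post-insertion frequency. A secondary point worth stating explicitly is precisely why those two assumptions are what license replacing $B_e^{\min}$ and $\max_k B_i[\cdot][k]$ by true frequencies — without them the theorem degrades to an inequality ($A_i[j]\le M_i[j]$ need not even hold in general, and $\ge$ is the direction guaranteed by the no-underestimation property).
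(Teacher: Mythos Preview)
Your argument is correct and, for the insertion case, follows essentially the same induction and case split as the paper: the paper's Case~1 (``$e$ was the most frequent item'') is your case $A_i[h_i(e)]=f$, and its Case~2 is your case $A_i[h_i(e)]>f$; you are simply more explicit than the paper about why both increment conditions (being the minimum hashed counter and being below $B_e^{\min}$) hold or fail. The one substantive difference is scope: the paper inducts \emph{only on insertions} (it states ``mathematical induction on number of insertions''), whereas you also carry the invariant through deletions. That extension is fine but, as you note, it leans on a stronger reading of the ``Fat-subsketch has negligible error'' assumption --- namely that each Fat counter $B_i[j][k]$ holds the frequency of a single item, so that $\max_k B_i[j][k]$ equals $M_i[j]$ --- rather than merely $B_e^{\min}=f_{(e)}$; you correctly flag this in your final paragraph.
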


\begin{proof}
We prove this theorem using mathematical induction on number of insertions, represented by $k$.

\noindent\textbf{Base Case, $k$ = 0:}
The theorem clearly holds for the base case because with no insertions, the frequency of the most frequent item is currently 0, which is also the value of all counters.

\noindent\textbf{Induction Hypothesis, $k$ = $n$:}
Suppose the statement of the theorem holds true after $n$ insertions.

\noindent\textbf{Induction Step, $k$ = $n+1$:}
Let $n+1^{\text{st}}$ insertion be of any item $e$ that has previously been inserted $a$ times.
Let $\alpha_i(k)$ represent the values of the counter $A_i[h_i(e)]$ after $k$ insertions, where $0\leqslant i\leqslant d-1$.
There are two cases to consider: 1) $e$ was the most frequent item when $k=n$; 2) $e$ was not the most frequent item when $k=n$.

\noindent\textbf{\textit{Case 1:}}
If $e$ was the most frequent item when $k=n$, then according to our induction hypotheses, $\alpha_i(n)=a$.
After inserting $e$, it will still be the most frequent item and its frequency increases to $a+1$.
The counter $A_i[h_i(e)]$ will be incremented once.
Consequently, we get $\alpha_i(n+1)=a+1$.
Thus for this case, the theorem statement holds because the value of the counter $A_i[h_i(e)]$ after insertion is still equal to the frequency of the most frequent item, which is $e$.

\noindent\textbf{\textit{Case 2:}}
If $e$ was not the most frequent item when $k=n$, then according to our induction hypotheses, $\alpha_i(n)>a$.
After inserting $e$, it may or may not become the most frequent item.
If it becomes the most frequent item, it means that $\alpha_i(n)=a+1$ and as our SF$_\text{F}$ scheme, the counter $A_i[h_i(e)]$ will stay unchanged.
Consequently, we get $\alpha_i(n+1)=\alpha_i(n)=a+1$.
Thus for this case, the theorem statement again holds because the value of the counter $A_i[h_i(e)]$ after insertion is equal to the frequency of the new most frequent item, which is $e$.

After inserting $e$, if it does not become the most frequent item, then it means $\alpha_i(n)>a+1$ and as our SF$_\text{F}$-sketch scheme, the counter $A_i[h_i(e)]$ will stay unchanged.
%
Consequently, $\alpha_i(n+1)=\alpha_i(n)>a+1$.
Thus, the theorem again holds because the value of the counter $A_i[h_i(e)]$ after insertion is still equal to the frequency of the item that was the most frequent after $n$ insertions.
\end{proof}

\vspace{0.09in}
Next, we derive the correct rate of the SF$_\text{F}$-sketch.
%
Let $v$ be the number of distinct items inserted into the slim-subsketch and are represented by $e_1, e_2, \ldots, e_v$.
Without loss of generality, let the item $e_{l+1}$ be more frequent than $e_l$, where $1\leqslant l\leqslant v-1$.
Let $X$ be the random variable representing the number of items hashing into the counter $A_i[h_i(e_l)]$ given the item $e_l$, where $0\leqslant i \leqslant d-1$ and $1\leqslant l\leqslant v$.
Clearly, $X\sim\text{Binomial}(v-1, 1/w)$.

%
From Theorem 1, we conclude that if $e_l$ has the highest frequency among all items that map to the given counter $A_i[h_i(e_l)]$, then the query result for $e_l$ will contain no error.
Let $\mathbb{E}$ be the event that $e_l$ has the maximum frequency among $x$ items that map to $A_i[h_i(e_l)]$.
The probability $P\{\mathbb{E}\}$ is given by the following equation:

\vspace{-0.15in}
\begin{equation*}
P\{\mathbb{E}\}=\begin{pmatrix}{l-1}\\{x-1}
\end{pmatrix}/
\begin{pmatrix}{v-1}\\{x-1}
\end{pmatrix}
~~~~~~~~~(\text{where } x\leqslant l)
\vspace{-0.05in}
\end{equation*}
Let $P'$ represent the probability that the query result for $e_l$ from any given counter contains no error.
It is given by:

\noindent
\begin{equation*}
\begin{aligned}
P'&=\sum_{x=1}^lP\{\mathbb{E}\}\times P\{X=x\}\\
&=\sum_{x=1}^l \dfrac{\binom{l-1}{x-1}}{\binom{v-1}{x-1}}\binom{v-1}{x-1}\Big(\frac{1}{w}\Big)^{x-1}\Big(1-\frac{1}{w}\Big)^{v-x}
=\Big(1-\frac{1}{w}\Big)^{v-l}
\end{aligned}
\vspace{-0.05in}
\end{equation*}
As there are $d$ counters, the overall probability that the query result of $e_l$ is correct is given by the following equation.
\vspace{-0.05in}
\begin{equation*}
P_{\text{CR}}\{e_l\}=1-\bigg(1-\Big(1-\frac{1}{w}\Big)^{v-l}\bigg)^d
\vspace{-0.05in}
\end{equation*}
The equality above holds when all $v$ items have different frequencies.
If two or more items have equal frequencies, the correct rate increases slightly.
Consequently, the expected correct rate $Cr$ of slim-subsketch is bound by:
\vspace{-0.05in}
\begin{equation}
Cr\geqslant \dfrac{\sum_{l=1}^v P_{\text{CR}}\{e_l\}}{v}
=\dfrac{\sum_{l=1}^v \left(1- \left( 1-(1-\frac{1}{w})^{v-l}\right)^d   \right)}{v}
\label{equ:cr}
\end{equation} 


\section{Implementation} 
\label{sec:implementation}

In this section, we describe our implementation of the sketches on two different computing platforms namely CPU and GPU.
We extensively tested and evaluated SF-sketch and compared its performance with prior sketches on these two platforms.
Next, we first describe our implementation on the CPU platform and then describe our implementation on the GPU platform.

\presub
\subsection{CPU Implementation} \postsub
Our CPU platform comprised a machine with dual 6-core CPUs (24 threads, Intel Xeon CPU E5-2620 @2 GHz) and 62 GB total system memory.
Each CPU has three levels of cache memory: L1, L2, and L3.
L1 cache is comprised of two 32KB caches, where one cache acts as the data cache and the other acts as the instruction cache.
L2 cache is a single 256KB cache and L3 cache is a single 15MB cache.
To evaluate the schemes in different types of settings, our implementations on the CPU platform include both single-thread implementation as well as multi-thread implementation.
We used C++ as the programming language.
In single-thread implementation, for each sketch, we implemented the entire insertion, deletion, and query process within a single thread.
In multi-thread implementation, we run each query in a dedicated thread and process it completely inside that thread, observing near-linear growth in query speed with the increase in the number of threads.
%
We will present the results on query speed in more detail in Section \ref{subsubsec:OnMulti-coreCPUPlatform}.

\presub
\subsection{GPU Implementation} \postsub
As GPUs have seen wide acceptance for high-speed data processing, we implemented our sketches on GPUs as well.
For these implementations, we employ the basic architecture of GAMT~\cite{GPU-GAMT}.
More specifically, we evaluated the sketches on GPU platform using CUDA 5.0 architecture.
We performed our experiments on a NVIDIA GPU (Tesla C2075, 1147 MHz, 5376 MB device memory, 448 CUDA cores).
We implemented our sketches on GPU using two prevalent techniques: batch processing and multi-stream pipelining.
Next, we describe our implementations for these two techniques.

\subsubsection{Batch Processing} 
\label{sect:gpu:batch}
Our system architecture is based on CUDA~\cite{CUDA-Best}, the well-known parallel computing platform created by NVIDIA.
In our implementation, a typical query cycle proceeds in following three steps: (1) copy the incoming queries from the CPU to the GPU, (2) execute the query kernel, and (3) copy the result from the GPU back to the CPU.
A kernel in CUDA is a function that is called on CPU but executed on GPU.
A query kernel is configured with a series of thread blocks, where each block is comprised of a group of working threads.
As GPU chips have hundreds and even thousands of cores, batch processing is needed to accelerate GPU-based implementations.
Each batch is first filled with a group of independent queries, and then transferred to and executed on the GPU, \ie, as soon as a query arrives, it is buffered until there are enough queries to fill the current batch of queries before transferring the batch to GPU for processing by the query kernels.
Note that in practice, not all the queries are processed simultaneously, but rather GPU's scheduler decides when to process which query.
As CPUs support less parallelism compared to GPUs and the additional memory accesses to the CPUs may deteriorate the batch processing performance of GPUs, in our implementation, all $d$ arrays are stored on the GPU to ensure that the operations to access the arrays are executed completely within the GPU.
%

\subsubsection{Multi-Stream Pipeline} 
\label{sect:gpu:stream}
As discussed earlier, batch processing is required to take maximum advantage of the massive parallelization that GPU enables.
However, waiting for enough queries to fill a batch before sending the batch to GPU results in unnecessary delays.
Furthermore, while a large batch does boost the throughput of the GPU, it increases the waiting time before a batch fills and is transferred to GPU for processing.
This means that the query that arrived at the start of the current batch will experience significant latency before it is processed.
To resolve this throughput-latency dilemma, we utilize the multi-stream technique featured in NVIDIA Fermi GPU architecture~\cite{GPU-NDN, GPU-GAMT}.
A stream, in this context, is a sequence of operations that must be executed in a certain order.

As per CUDA architecture, data transfers and kernel executions within different streams can be concurrent as long as the device supports concurrent operations and the host memories used to exchange data between the CPU and the GPU are page-locked.
In this way, when one stream is copying data between the CPU and GPU, another stream can execute query kernels in parallel.
As a result, the streams behave as a multi-stage pipeline and reduce the total processing time.

Furthermore, a large batch can be divided into several smaller ones, reducing the average lookup latency while keeping the throughput high.
Given a batch of requests and a sequence of active streams, the task mapping should be performed in as balanced way as possible to efficiently use GPU's parallelism.
Let $b$ and $n$ denote the batch size and the number of active streams, respectively.
If $b$ is just a multiple of $n$, the whole batch can be evenly divided.
Otherwise, the first $b~\%~n$ streams may need to perform an extra operation.
In our implementation, after dividing the whole batch into multiple smaller batches of approximately identical sizes, we use their offset and size information for task mapping.
Each small batch is processed by the specified stream, and all streams are launched one after the other to work as a multi-stage pipeline.


	\presec
\section{Experimental Results}
\postsec
\label{sec:experiments}

\begin{figure*}[t]
\centering
	\begin{minipage}[t]{0.241\textwidth}
	\includegraphics[width=1\textwidth]{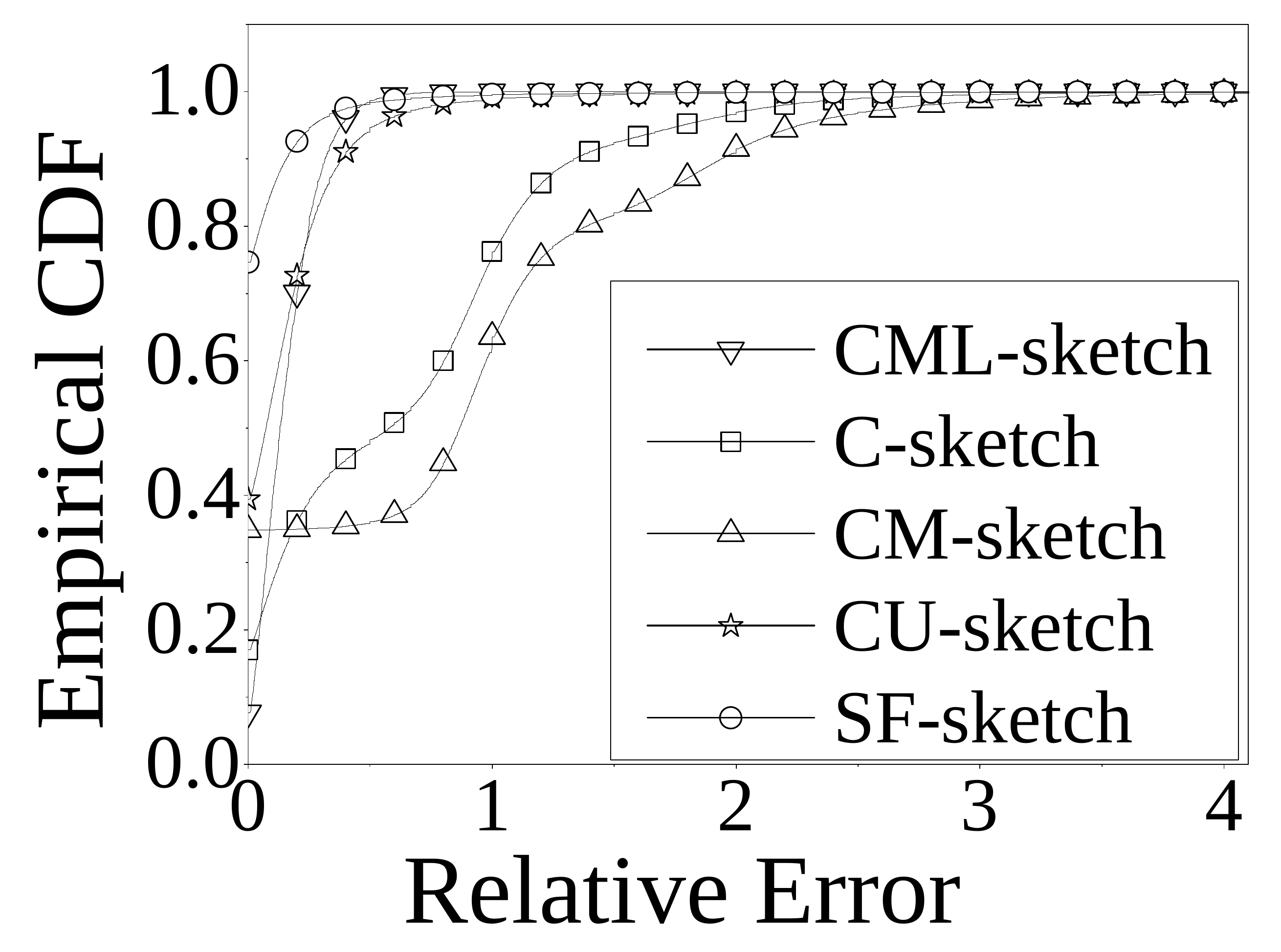}
	\caption{CDF of relative error (uniform).}
	\label{fig:cpu:exp:CDF-RE-UNF}
	\end{minipage}
	\begin{minipage}[t]{0.239\textwidth}
	\includegraphics[width=1\textwidth]{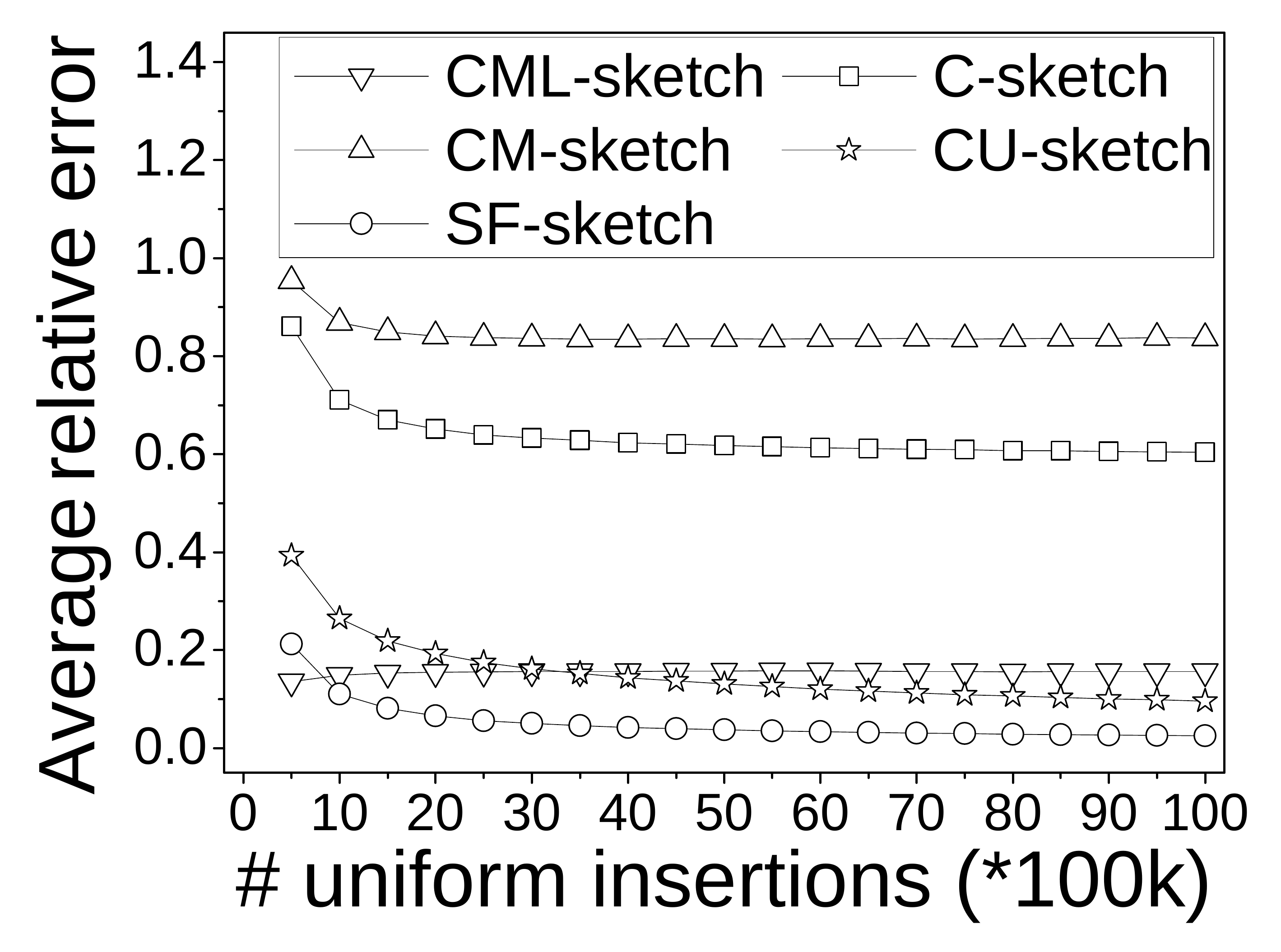}
	\caption{Average relative error vs. number of insertions (uniform).}
	\label{fig:cpu:exp:ARE-UNF-INS}
	\end{minipage}
	\begin{minipage}[t]{0.239\textwidth}
	\includegraphics[width=1\textwidth]{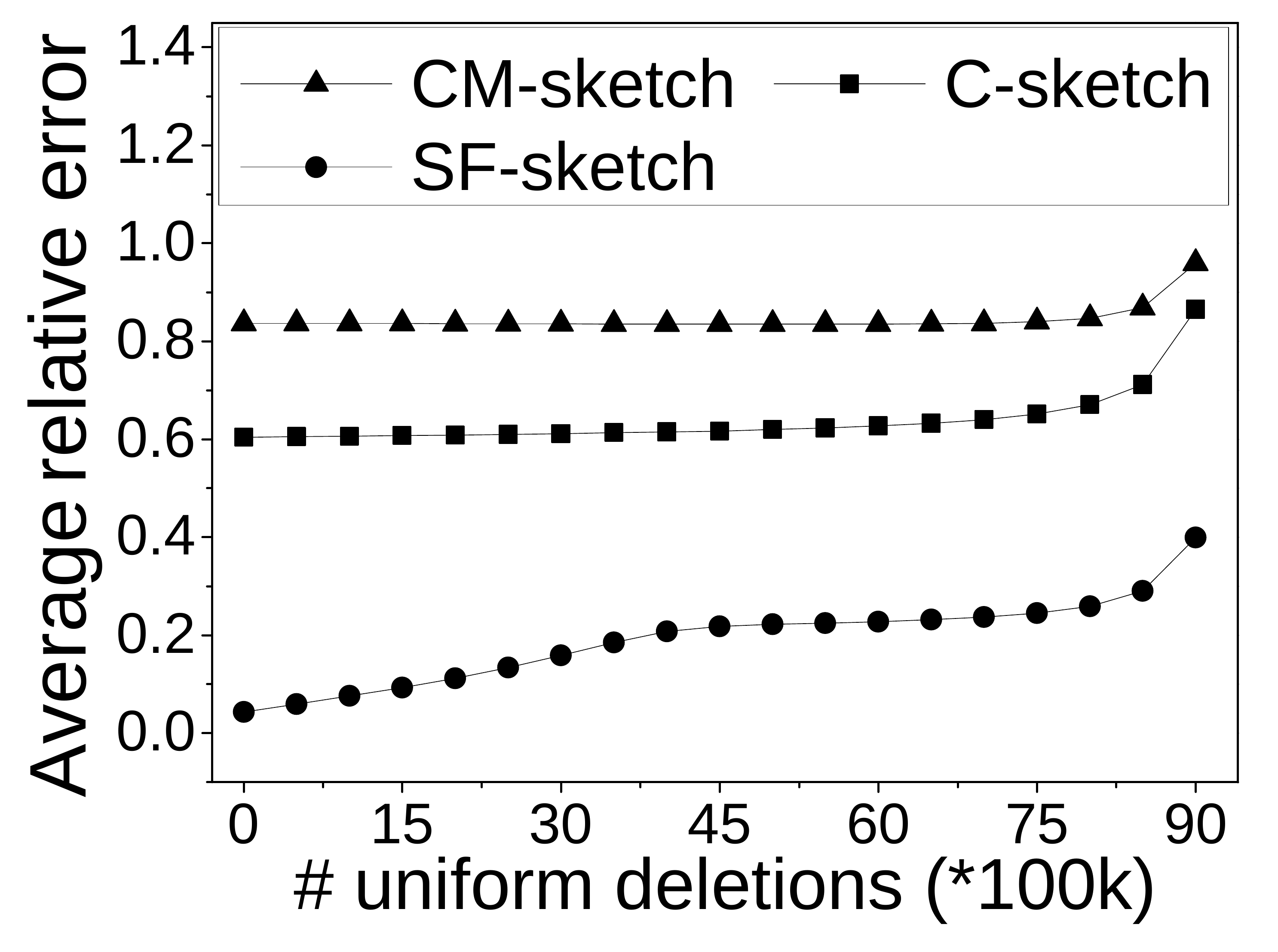}
	\caption{Average relative error vs. number of deletions (uniform).}
	\label{fig:cpu:exp:ARE-UNF-DEL}
	\end{minipage}
	\begin{minipage}[t]{0.242\textwidth}
	\includegraphics[width=1\textwidth]{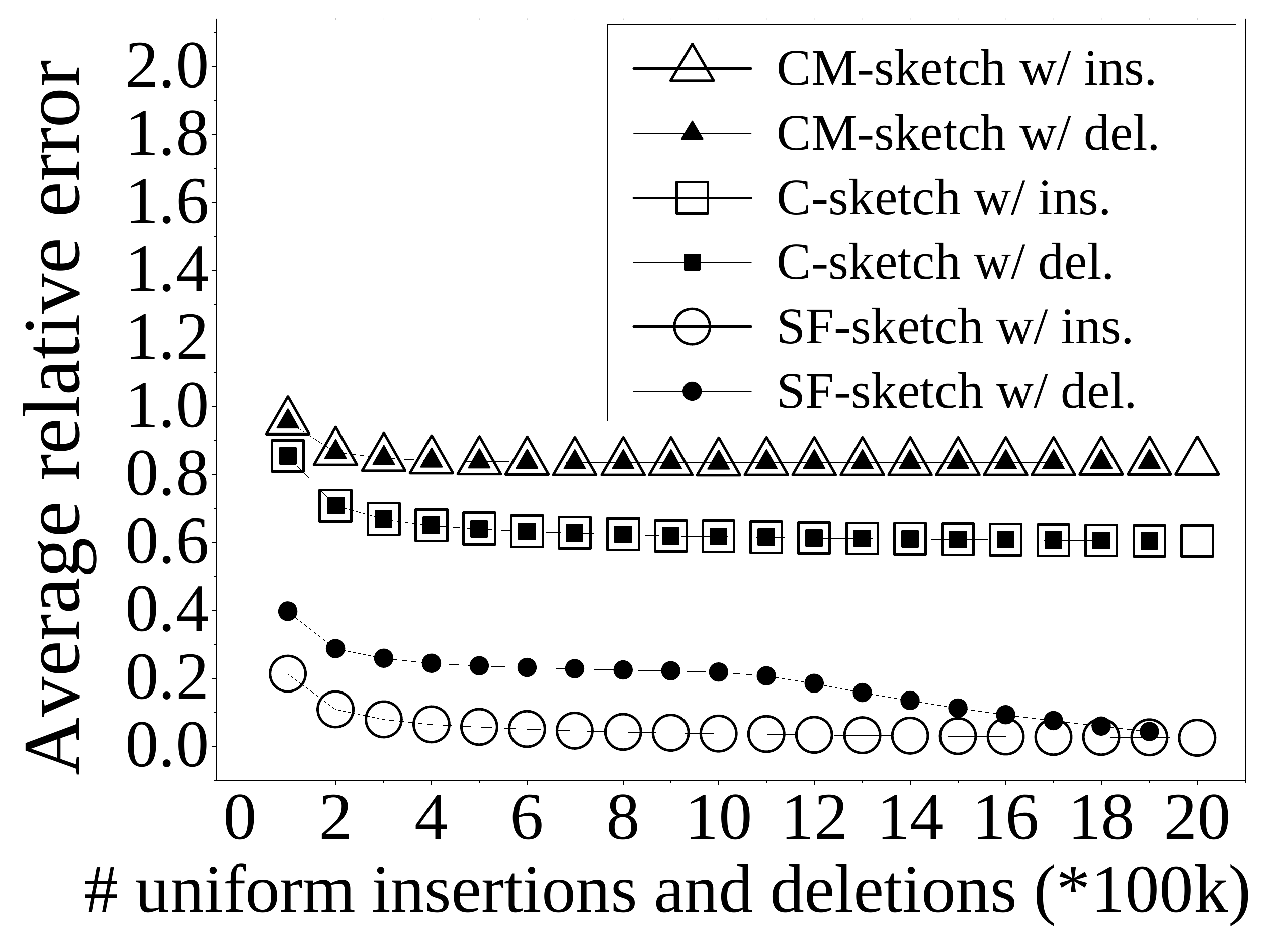}
	\caption{Increase in error due to deletions (uniform).}
	\label{fig:cpu:exp:ARE-UNF-INS-DEL}
	\end{minipage}
%
	\begin{minipage}[t]{0.241\textwidth}
	\includegraphics[width=1\textwidth]{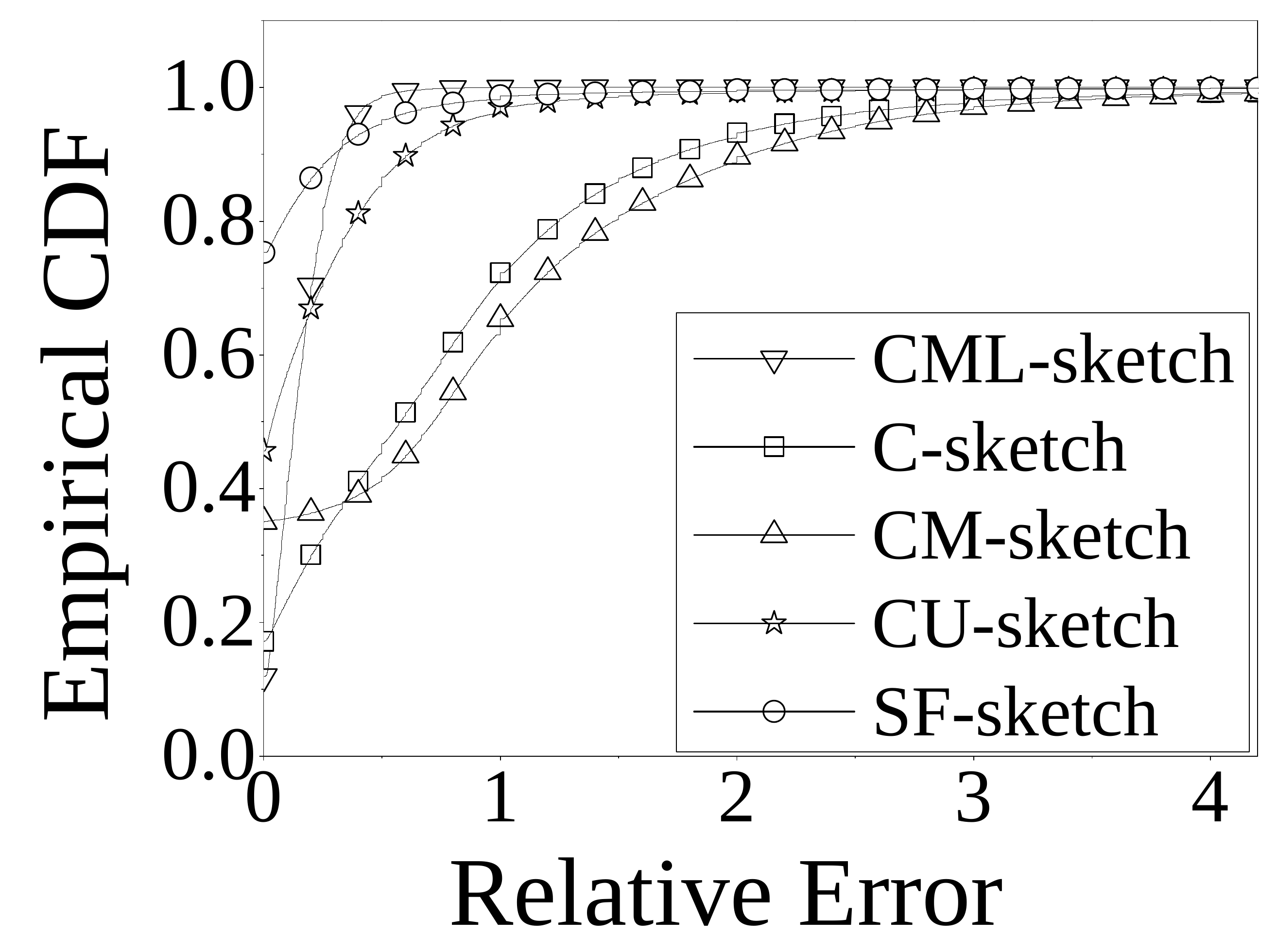}
	\caption{CDF of relative error (skewed).}
	\label{fig:cpu:exp:CDF-RE-ZIPF}
	\end{minipage}
	\begin{minipage}[t]{0.237\textwidth}
	\includegraphics[width=1\textwidth]{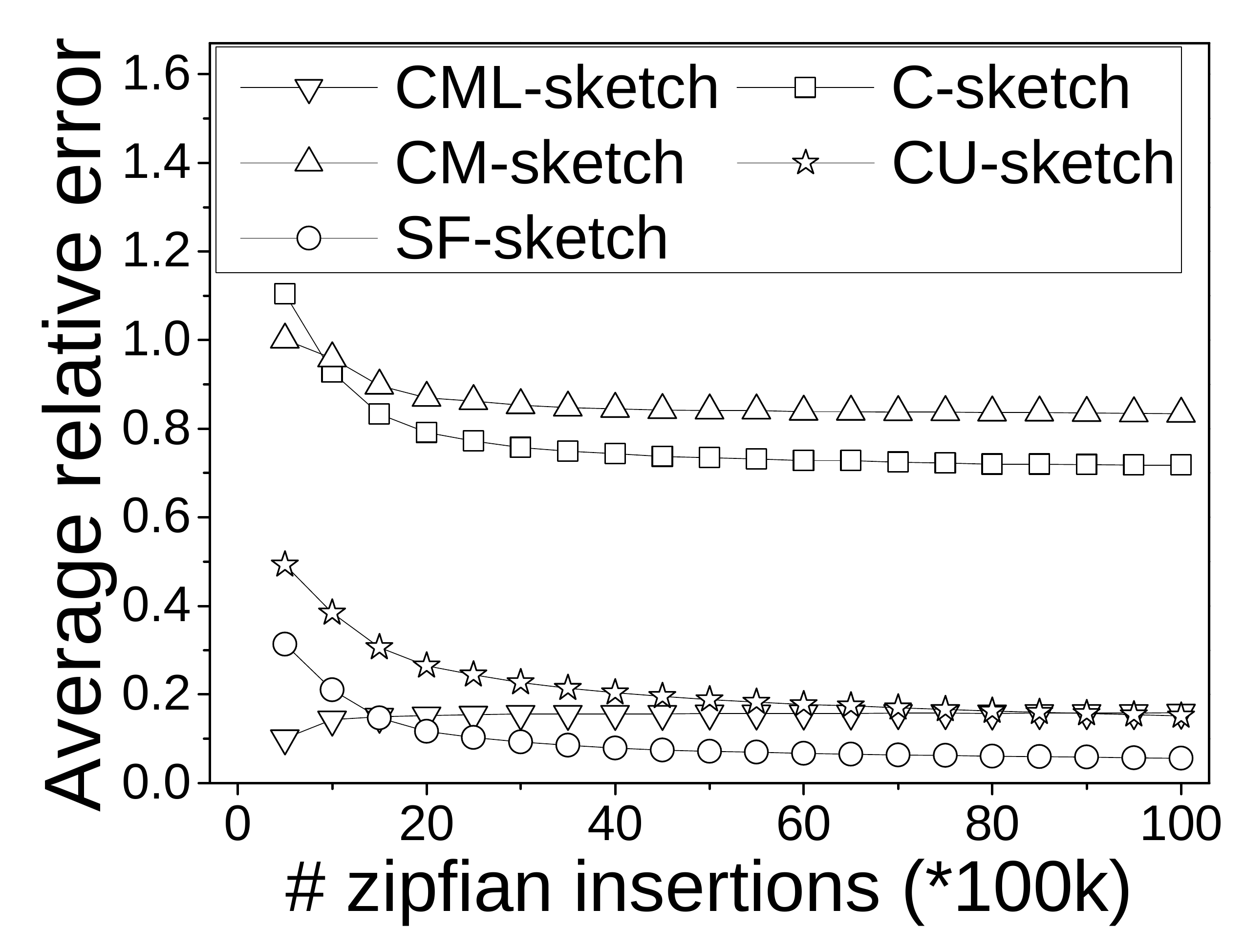}	
	\caption{Average relative error vs. number of insertions (skewed).}
	\label{fig:cpu:exp:ARE-ZIPF-INS}
	\end{minipage}
	\begin{minipage}[t]{0.237\textwidth}
	\includegraphics[width=1\textwidth]{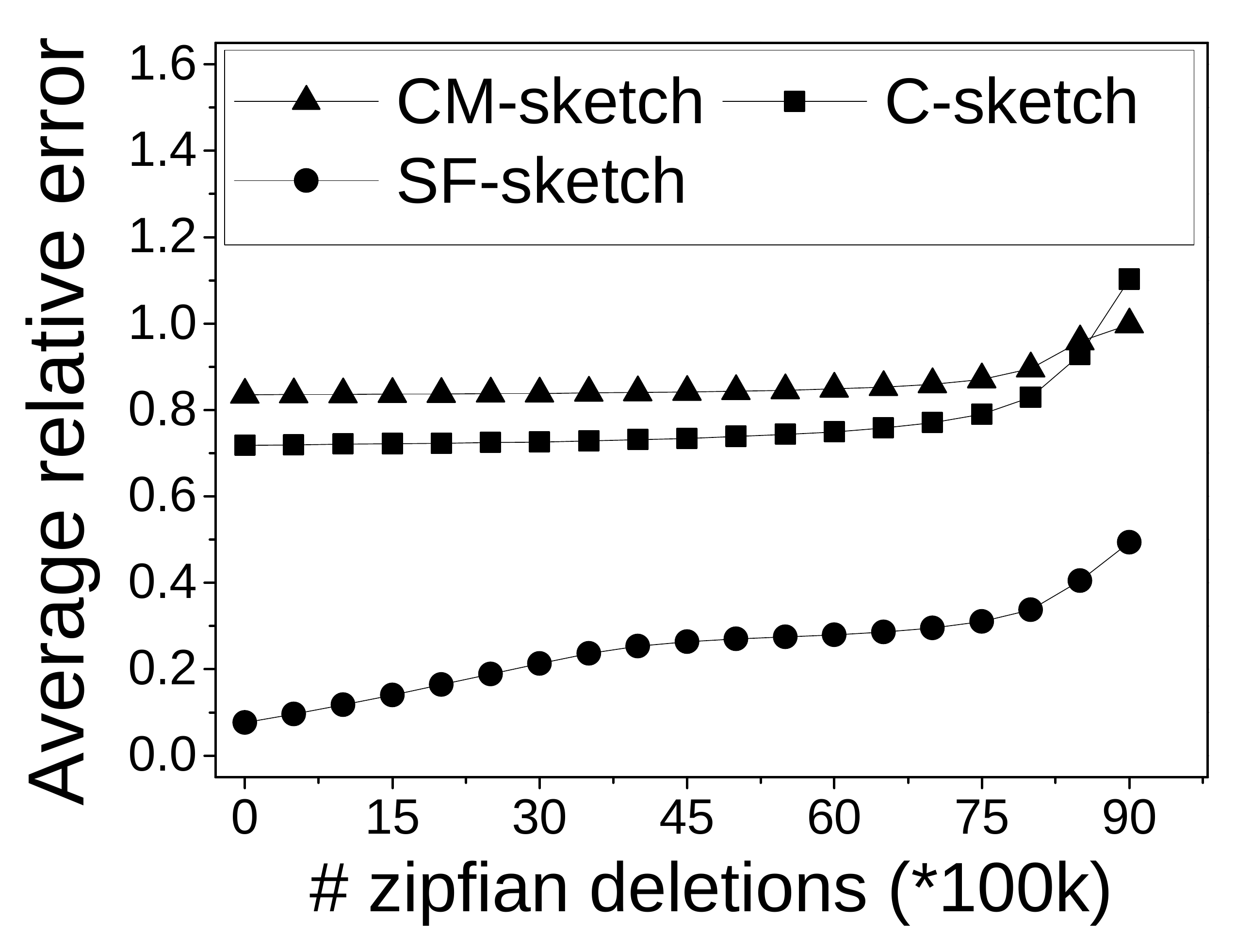}
	\caption{Average relative error vs. number of deletions (skewed).}
	\label{fig:cpu:exp:ARE-ZIPF-DEL}
	\end{minipage}
	\begin{minipage}[t]{0.241\textwidth}
	\includegraphics[width=1\textwidth]{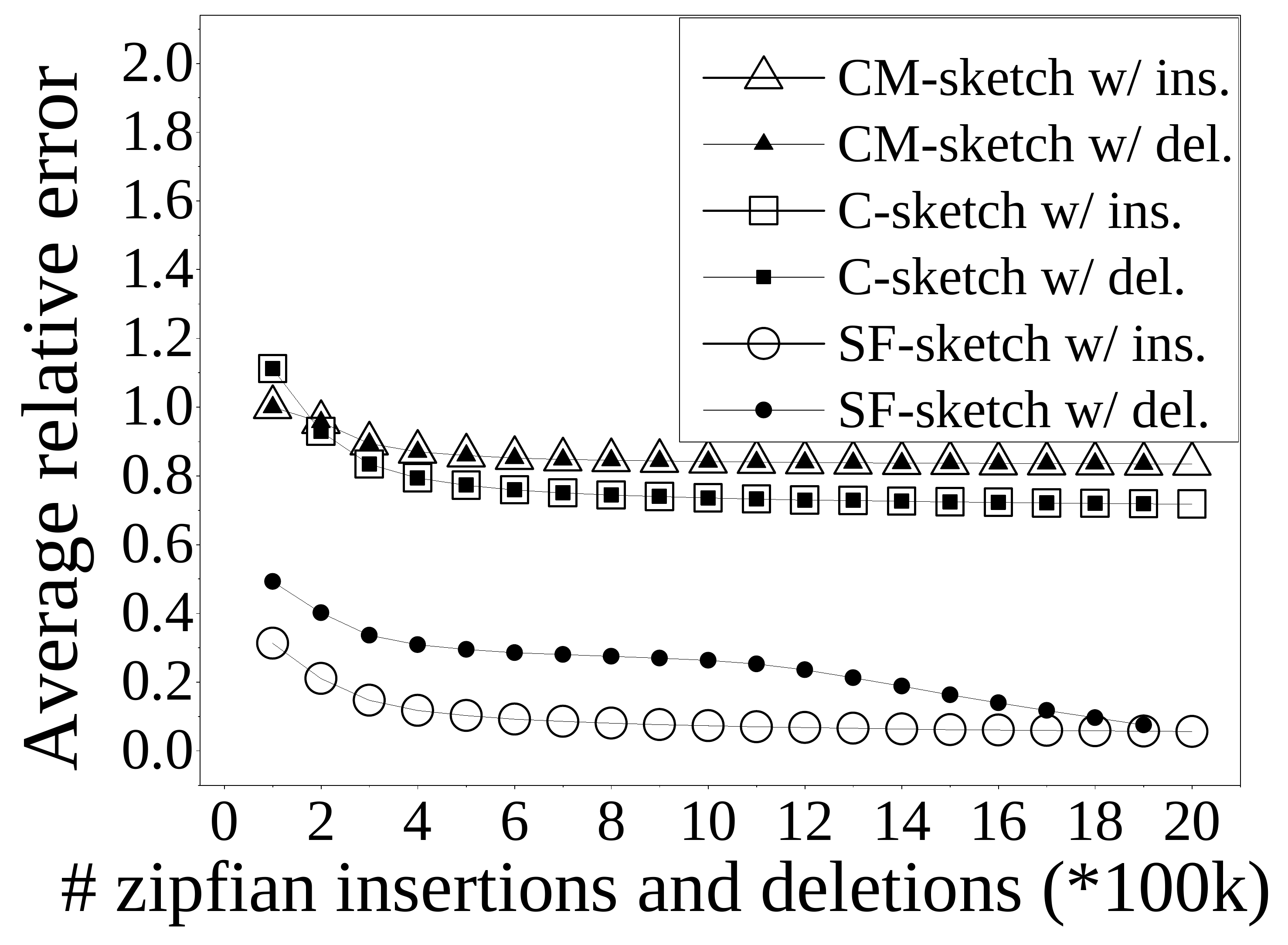}
	\caption{Increase in error due to deletions(skewed).}
	\label{fig:cpu:exp:ARE-ZIPF-INS-DEL}
	\end{minipage}
	\postfig \postfig
\end{figure*}

We conducted extensive experiments to evaluate the performance of our SF$_\text{F}$-sketch in terms of \emph{accuracy} and \emph{speed}.
Onwards, we will refer to the SF$_\text{F}$-sketch as simply the SF-sketch.
For comparison, we also implemented and evaluated the performance of four well known sketches, namely the Count-sketch (C-sketch) \cite{countsketch}, the CM-sketch \cite{cmsketch} and the CU-sketch \cite{cusketch} and one recently proposed sketch, namely the CML-sketch \cite{cmlog}. \textit{CML-sketch and CU-sketch do NOT support deletions. CML-sketch and Count-sketch suffer from both over-estimation and under-estimation errors.}

\subsection{Experimental Setup}
\noindent\textbf{Datasets:} We use three types of datasets: real world traffic, uniform dataset, and skewed dataset.
The real world network traffic trace is captured by the main gateway of our campus, while the uniform and the skewed datasets are generated by the well known YCSB~\cite{YCSB}.
We keep the skewness of our skewed dataset equal to the \texttt{default value} for YCSB, which is 0.99.
We use Memcached~\cite{memcached} to record the real frequency of each item to establish the ground truth.

\noindent\textbf{Experimental Comparison}:
As the memory of the monitor is cheap and large enough, thus we assign the same size of memory for the Slim-subsketch and the state-of-the-art sketches both of which will be transmitted to the collector.
For update experiments, we compare them by varying item frequencies and operation size, \ie, the number of insertion and deletion operations.
%

\presub
\subsection{Experiments on Accuracy}
\postsub
We use \emph{relative error} ($RE$) to quantify the accuracy of sketches.
Let $f_e$ represent the actual frequency of an item $e$ and let $\hat{f}_e$ represent the estimate of the frequency returned by the sketch, the relative error is defined as the ratio $|\hat{f}_e-f_e|/f_e$.
To evaluate accuracy, we used 100K distinct items and fixed parameter setting ($d$ = 5, $w$ = 40000, $z$ = 3).
We calculated relative errors for different sketches in three settings: (1) by incrementally increasing the number of insertion operations; (2) by incrementally increasing the number of deletion operations; and (3) by first increasing the number of insertion operations and then deleting the inserted items one by one in reverse order.
We performed experiments in these three settings for both uniform and skewed workloads.
We also conducted experiments to quantify the effect of system parameters on the performance of the sketches.
In all our experiments on accuracy evaluation, we use 100K ($=100\times 10^3$) distinct items in total.
%


\presub
\subsubsection{Uniform Workload}
\postsub
%
%

\noindent\textbf{Relative Error CDF:}
\textit{Our experimental results show that the percentage of items for which the relative error of our SF-sketch is less than 1\% is 74.51\%, which is 18.8, 4.3, 2.1 and 1.9 times higher than the corresponding percentages for CML, C, CM and CU-sketches, respectively.
}
Figure~\ref{fig:cpu:exp:CDF-RE-UNF} reports the empirical cumulative distribution function (CDF) of relative error for the 100K distinct items after a total of 10M ($=10\times 10^6$) insertions.
Specifically, we first inserted the 100K distinct items for a total of 10M times such that the probability of occurrence for each item was uniformly distributed, and then calculated the relative errors in the estimates of the frequencies of those 100K distinct items.
In this way, we got 100K values of relative error for each of the five sketches (CML, C, CM, CU and SF-sketches).
We then plotted a CDF using the 100K relative error values for each sketch.
We observe from Figure~\ref{fig:cpu:exp:CDF-RE-UNF} that the CDF of the SF-sketch is not only higher than that of the other four sketches but also ascends sharply near relative error of 0.
This indicates that the relative error in the estimate of the frequencies of most items, calculated from the SF-sketch, is very close to 0.
%

%


\noindent\textbf{Relative Error vs. \# of Insertions:}
\textit{Our experimental results show that the average relative error of SF-sketch is [0.6 to 6.2], [4.0 to 24.0], [4.4 to 33.1], and [1.8 to 3.8] times smaller than the average relative errors of CML, C, CM, and CU-sketches, respectively.}
Figure~\ref{fig:cpu:exp:ARE-UNF-INS} plots the average relative errors in the estimate of the frequencies of the 100K distinct items obtained from the five sketches for different number of insertions.
%
We observe from this figure that the average relative errors of the five sketches converge to different fixed values with increasing number of insertions with our SF-sketch being the most accurate.
The converged average relative error of our SF-sketch is 6.2, 24.0, 33.1 and 3.8 times smaller than the converged average relative errors of CML, C, CM, and CU-sketch, respectively.


%

\noindent\textbf{Relative Error vs. \# of Deletions:}
\textit{Our experimental results show that the average relative error of SF-sketch is [2.1 to 23.8] and [2.4 to 33.0] times smaller than the average relative errors of C and CM-sketches, respectively.}
Figure~\ref{fig:cpu:exp:ARE-UNF-DEL} plots the average relative errors in the estimate of the frequencies of the 100K distinct items obtained from the three sketches for different number of deletions.
Before starting the deletions, we inserted the 100K items 10M times for each sketch.
Note that Figure~\ref{fig:cpu:exp:ARE-UNF-DEL} does not include results for CML and CU-sketches because they do not support deletions.



\noindent\textbf{Increase in Error due to Deletions:}
\textit{Our experimental results show that our SF-sketch looses some accuracy due to deletions, while C and CM-sketches do not. Despite that, the average relative error of SF-sketch is still [2.1 to 24.1] times lower compared to the C-sketch and [2.4 to 33.4] times lower compared to the CM-sketch.}
In this experiment, we first inserted the 100K items 10M times, and then deleted them in reverse order as the insertions.
After every 100K insertions, we calculated the average relative error for the 100K distinct items and plotted them in Figure~\ref{fig:cpu:exp:ARE-UNF-INS-DEL}.
Similarly, after every 100K deletions, we calculated the average relative error for the 100K distinct items and plotted them in Figure~\ref{fig:cpu:exp:ARE-UNF-INS-DEL}.
The lines with a hollow square/triangle/circle indicate average relative errors calculated after insertion operations, while the lines with solid square/triangle/circle indicate average relative errors calculated after deletion operations.
Again, Figure~\ref{fig:cpu:exp:ARE-UNF-INS-DEL} does not include results for CML and CU-sketches because they do not support deletions.
We observe from this figure that the lines with a hollow square/triangle for the C and CM-sketches perfectly track the corresponding lines with solid ones, which means that deletion operations do not reduce accuracy in the C and CM-sketches.
However, the line with a hollow circle for the SF-sketch lie below the corresponding line with solid circle showing that the deletion operation deteriorates the accuracy of SF-sketch.
The reason is that when deleting an item, SF-sketch cannot always decrement all counters of Slim-subsketch that it incremented when inserting that item because decrementing all those counters can lead to underestimation error.
Thus, SF-sketch supports deletions at the cost of slightly reduced accuracy after deletions.



\begin{figure*}[t]
	\centering
		\begin{minipage}[t]{0.233\textwidth}
			\includegraphics[width=1\textwidth]{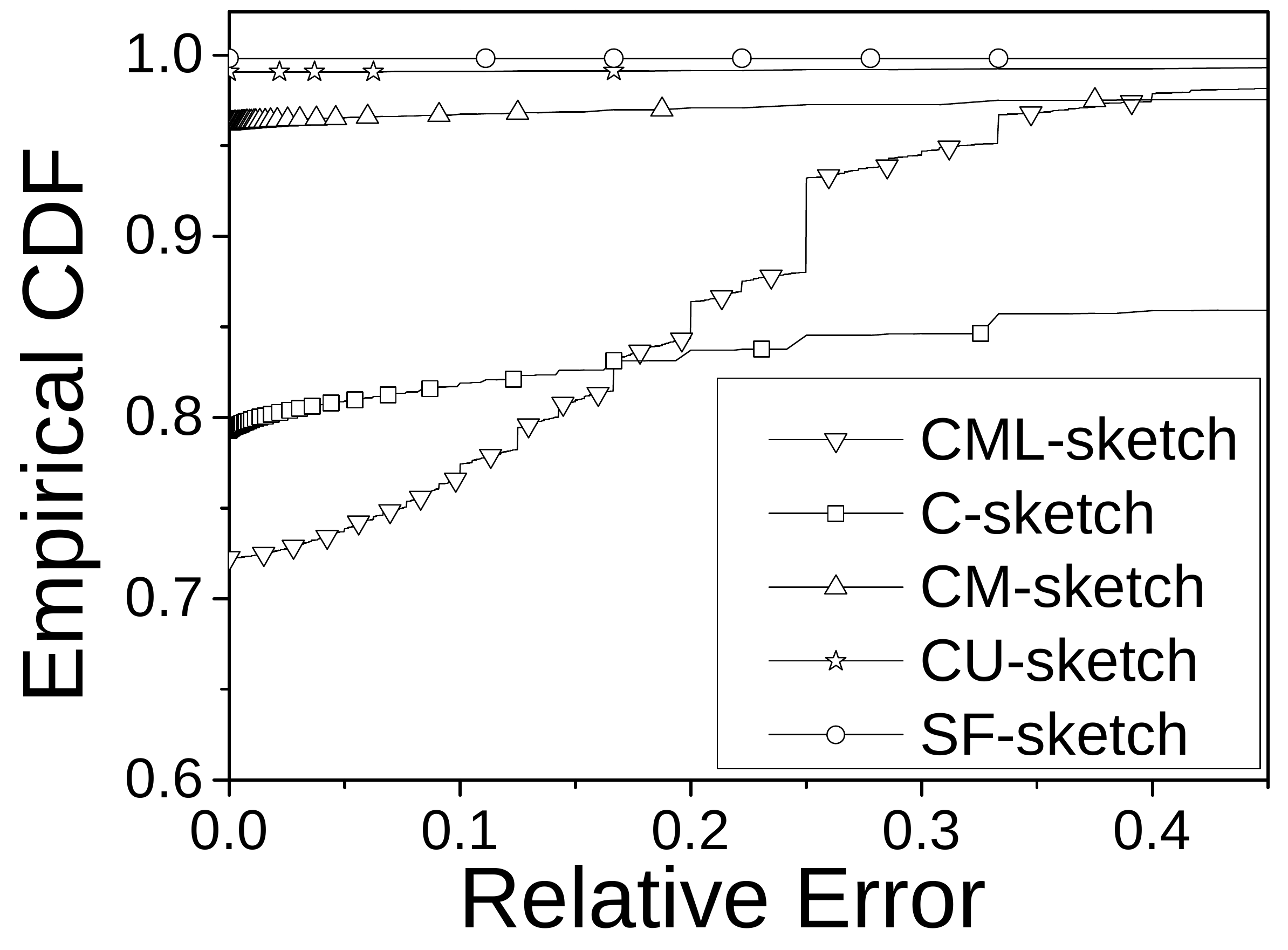}
			\caption{CDF of relative error (real).}
			\label{fig:real:cdf}
		\end{minipage}
			\begin{minipage}[t]{0.23\textwidth}
				\includegraphics[width=1\textwidth]{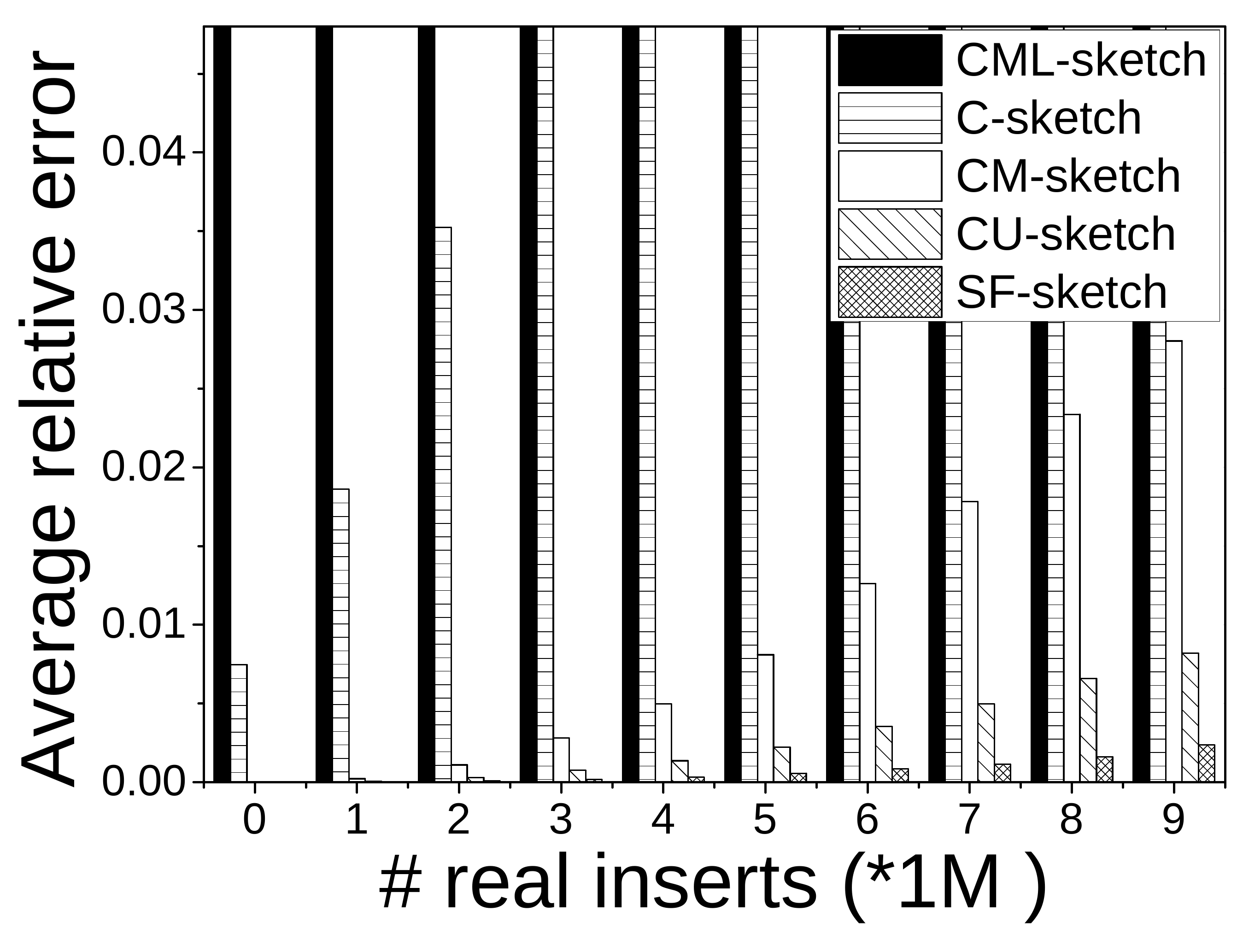}
				\caption{Average relative error vs. number of insertions (real).}
				\label{fig:real:insert}
			\end{minipage}	
	\begin{minipage}[t]{0.223\textwidth}
		\includegraphics[width=1\textwidth]{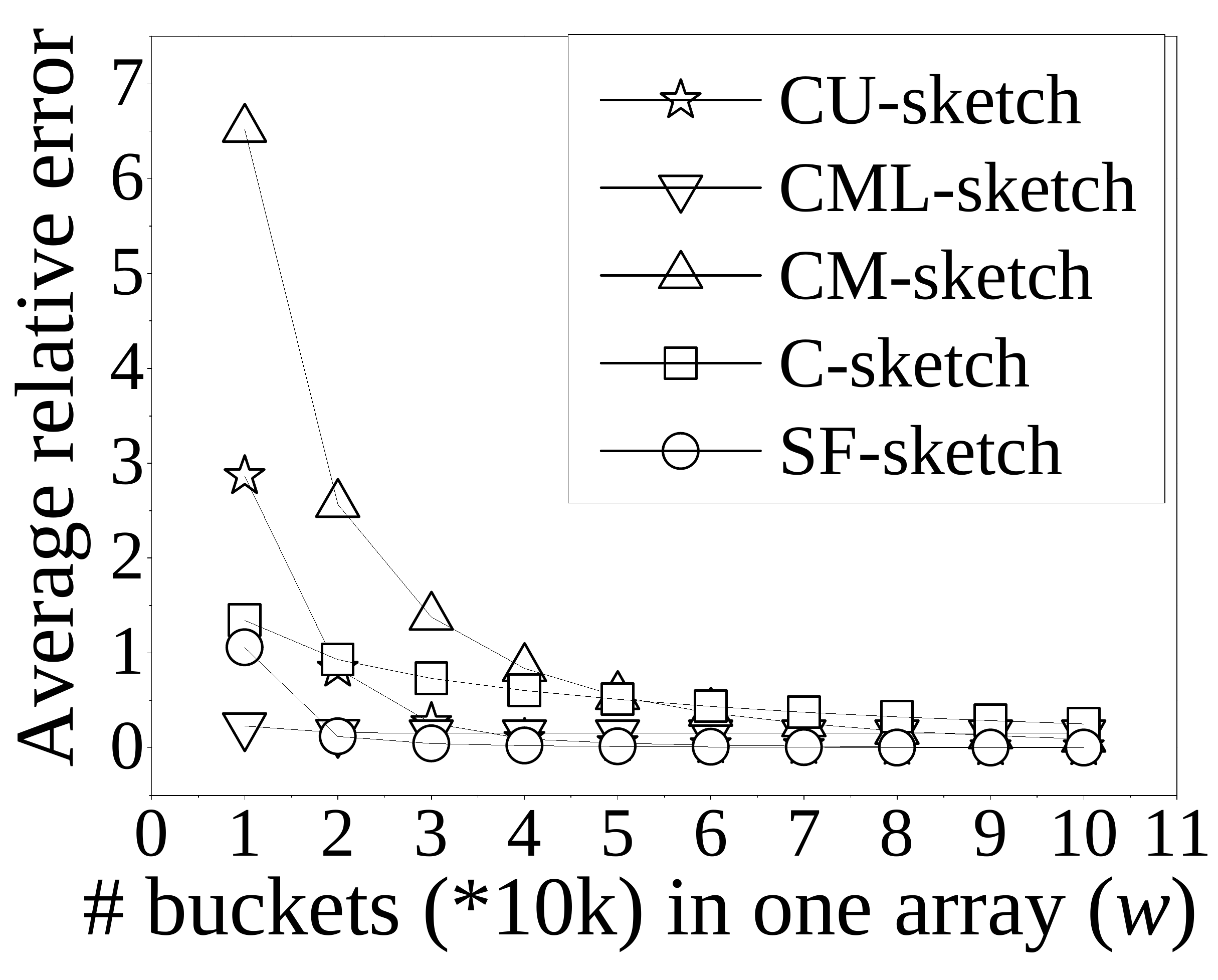}
		\caption{Accuracy vs. \# of buckets per array $w$.}
		\label{fig:cpu:exp:opt-fixed-d}
	\end{minipage}
	\begin{minipage}[t]{0.23\textwidth}
		\includegraphics[width=1\textwidth]{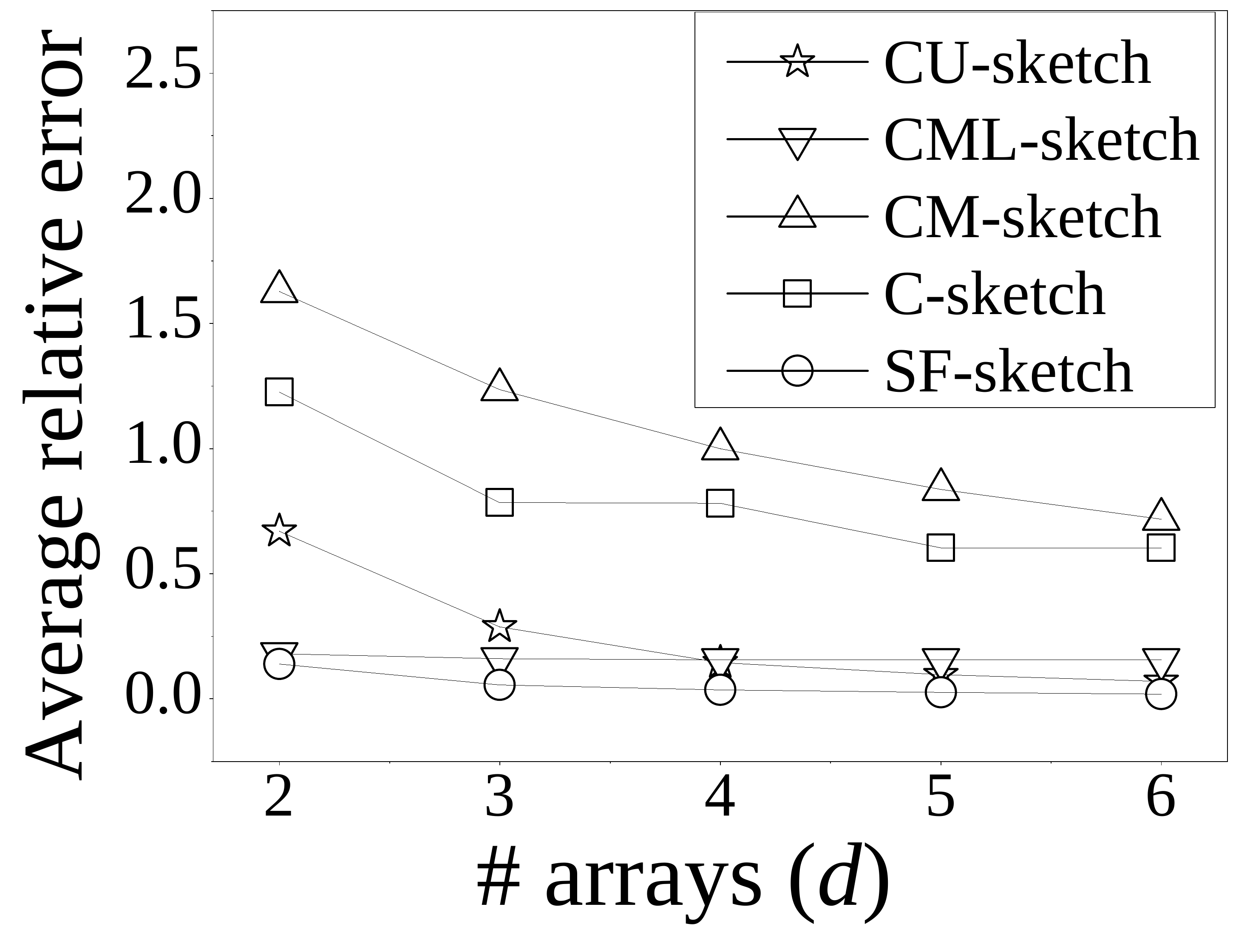}
		\caption{Accuracy vs. \# of arrays $d$.}
		\label{fig:cpu:exp:opt-fixed-w}
	\end{minipage}
	%
	\postfig \postfig
\end{figure*}

\begin{figure*}[t]
	\centering
	\begin{minipage}[t]{0.23\textwidth}
		\includegraphics[width=1\textwidth]{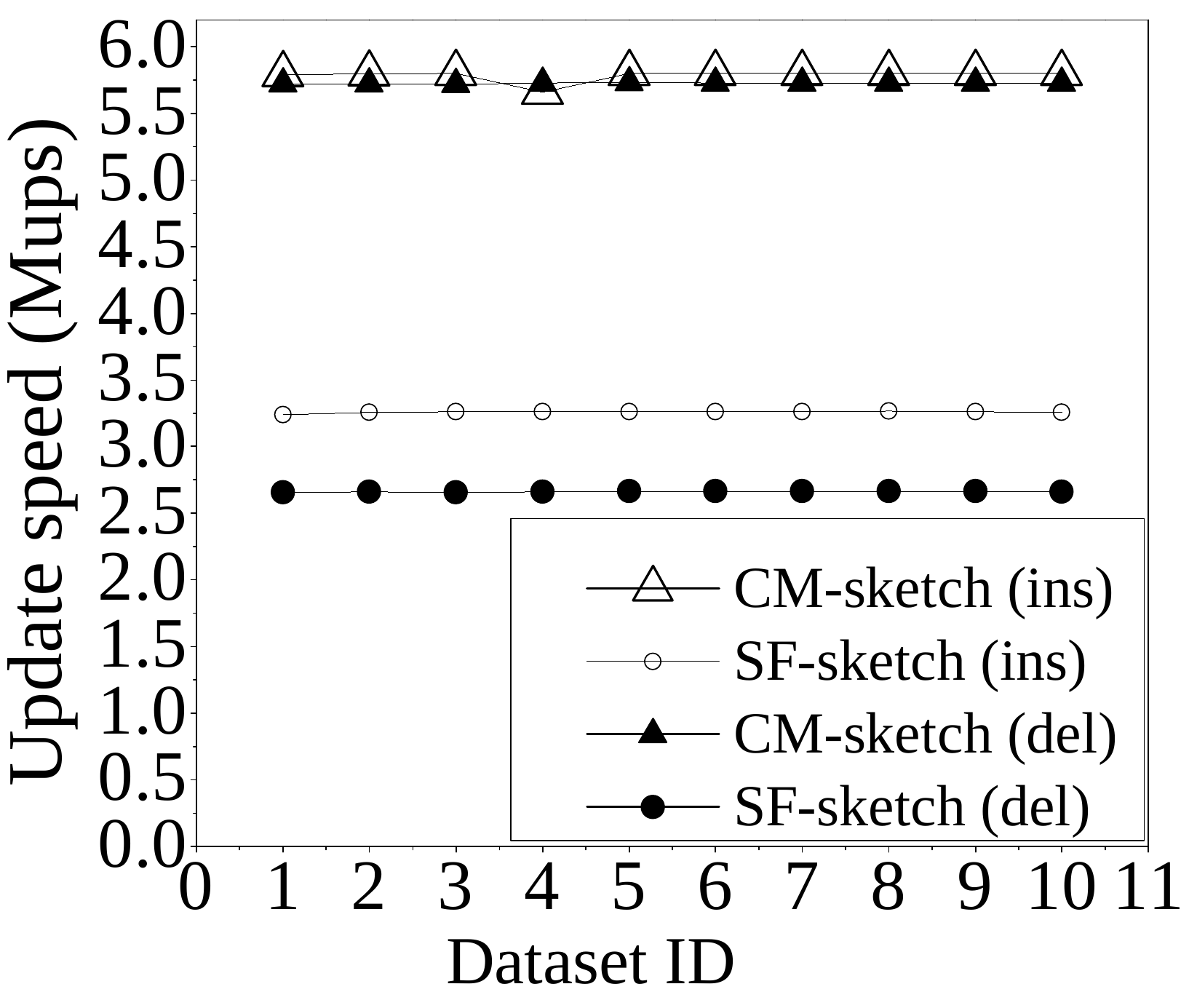}
		\caption{The insertion and deletion speed of CM-sketch and SF-sketch.}
		\label{fig:cpu:exp:updatespeed}
	\end{minipage}
	\centering
	\begin{minipage}[t]{0.23\textwidth}
		\includegraphics[width=1\textwidth]{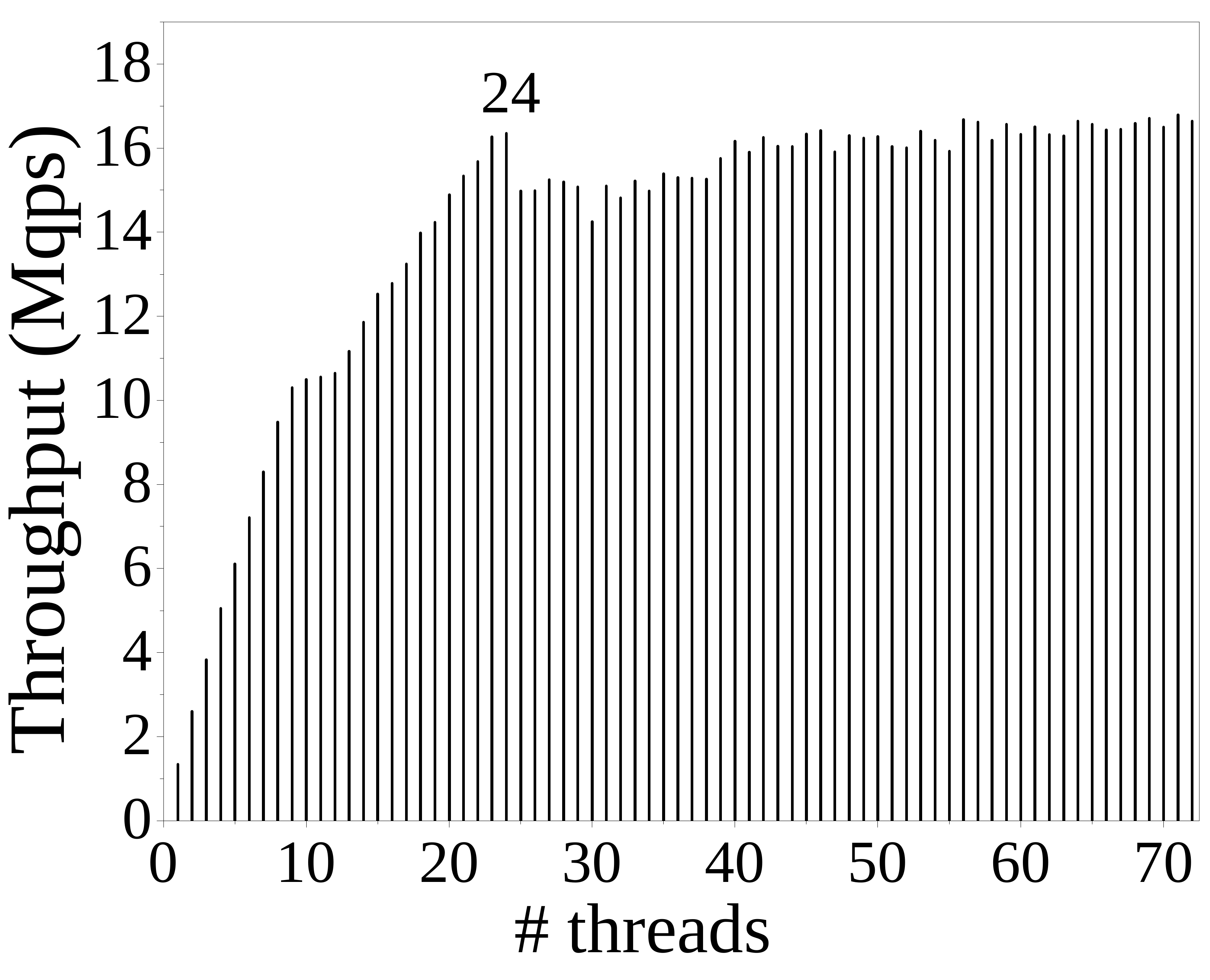}
		\caption{Throughput vs. \# of threads.}
		\label{fig:cpu:exp:multi-threads}
	\end{minipage}
	%
	\begin{minipage}[t]{0.23\textwidth}
		\includegraphics[width=1\textwidth]{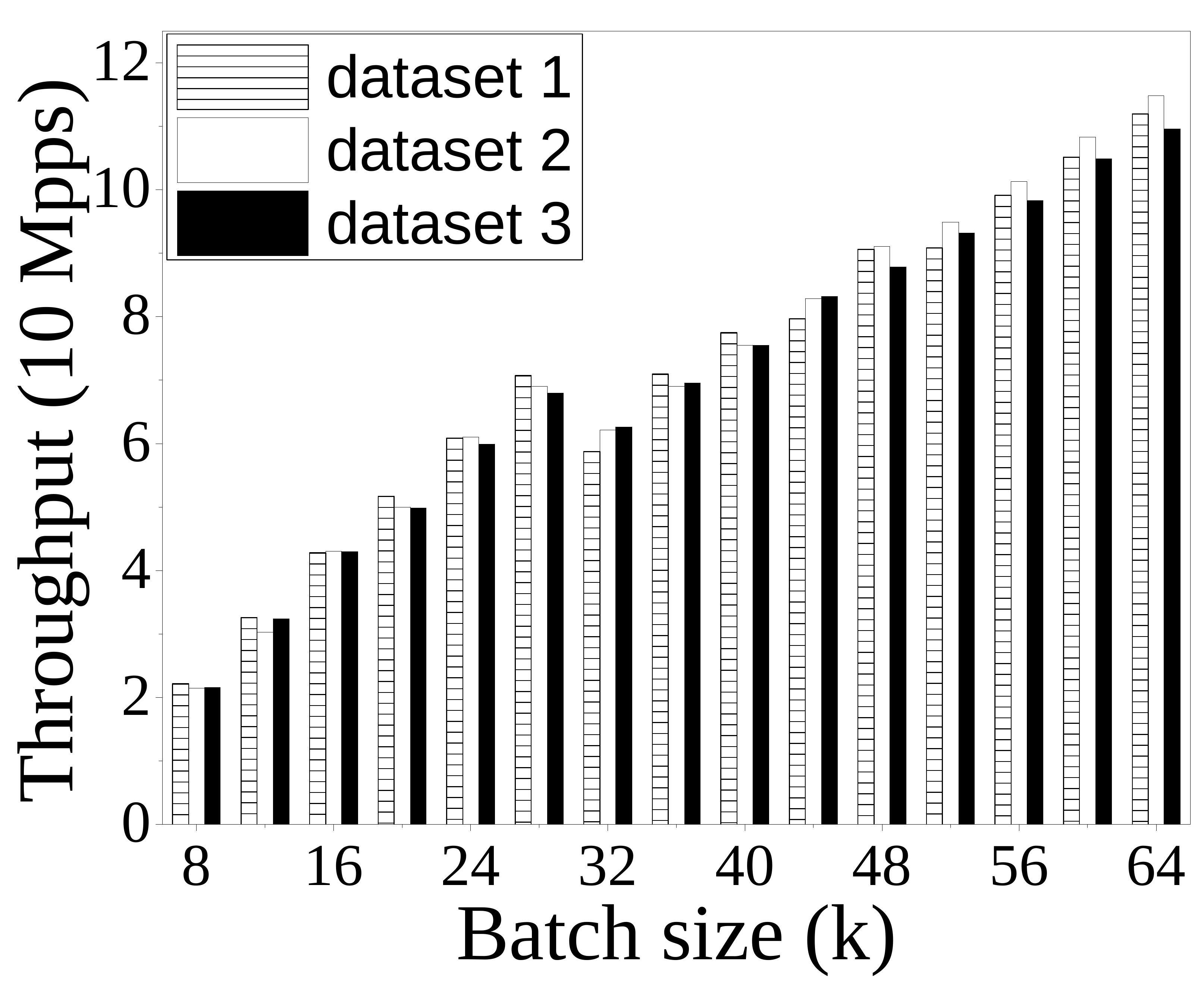}
		\caption{GPU Throughput vs. batch size.}
		\label{fig:gpu:exp:speed}
	\end{minipage}
		%
		\begin{minipage}[t]{0.23\textwidth}
			\includegraphics[width=1\textwidth]{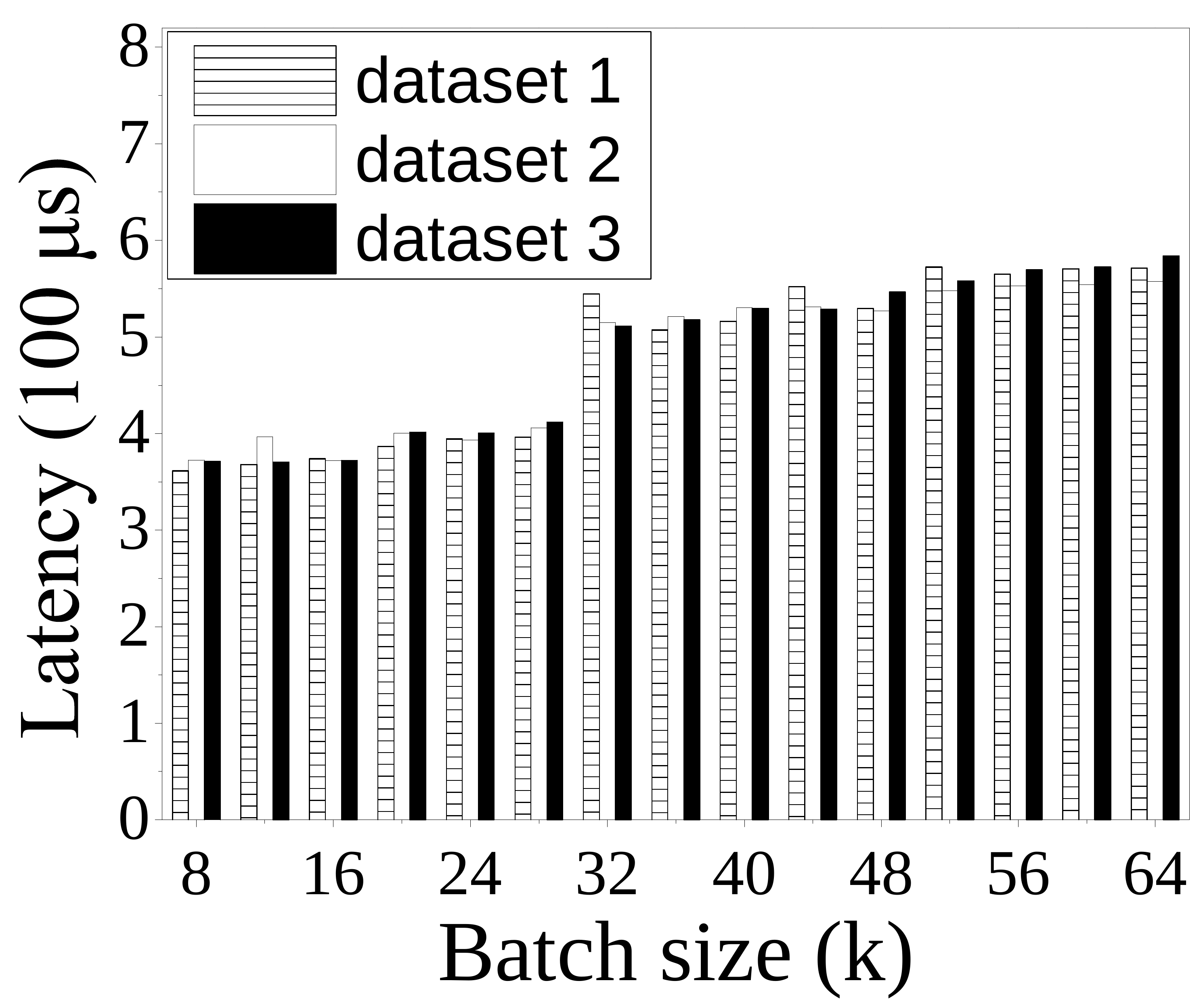}
			\caption{GPU latency vs. batch size.}
			\label{fig:gpu:exp:latency}
		\end{minipage}
\end{figure*}


\presub
\subsubsection{Skewed Workload}
For skewed workloads,
%
we performed exactly the same experiments as for the uniform workloads.
The results from these experiments are shown in Figures \ref{fig:cpu:exp:CDF-RE-ZIPF}, \ref{fig:cpu:exp:ARE-ZIPF-INS}, \ref{fig:cpu:exp:ARE-ZIPF-DEL}, and \ref{fig:cpu:exp:ARE-ZIPF-INS-DEL}.
%
%
The trends in these figures for the skewed distribution are similar to what we observed for the uniform distribution.
Therefore, for the sake of brevity, next, we concisely report the results without describing again how the experiments were conducted.

%

\noindent\textbf{Relative Error CDF:}
\emph{Our experimental results, reported in Figure \ref{fig:cpu:exp:CDF-RE-ZIPF}, show that in case of skewed workload, the percentage of items for which the relative error of our SF-sketch is less than 1\% is 74.30\%, which is 21.3, 4.6, 2.1 and 1.7 times higher than the corresponding percentages for CML, C, CM and CU sketches, respectively.
}

%



\noindent\textbf{Relative Error vs. \# of Insertions:}
\textit{Our experimental results, reported in Figure \ref{fig:cpu:exp:ARE-ZIPF-INS}, show that in case of skewed workload, the average relative error of SF-sketch is $[0.3, 2.8]$, $[3.2, 12.7]$, $[3.0, 14.8]$, and $[1.5, 2.7]$ times smaller than the average relative errors of CML, C, CM, and CU sketches, respectively.}
The converged average relative error of our SF-sketch is 2.8, 12.7, 14.8 and 2.7 times smaller than the converged average relative errors of CML, C, CM, and CU-sketch, respectively.

\noindent\textbf{Relative Error vs. \# of Deletions:}
\textit{Our experimental results, reported in Figure~\ref{fig:cpu:exp:ARE-ZIPF-DEL}, show that for skewed workload, the average relative error of SF-sketch is [2.1 to 12.7] and [1.9 to 14.8] times smaller than the average relative errors of C and CM-sketches, respectively.}
%


\noindent\textbf{Increase in Error due to Deletions:}
\textit{Our experimental results, reported in Figure~\ref{fig:cpu:exp:ARE-ZIPF-INS-DEL}, show that SF-sketch looses some accuracy due to deletions, while C and CM-sketches do not. Despite that, the average relative error of SF-sketch is still [2.1 to 12.7] times lower compared to the C-sketch and [1.9 to 14.8] times lower compared to the CM-sketch.}
%


\vspace{0.05in}
\presub
\subsubsection{Real Traffic}
\postsub
We also used real traffic to evaluate the accuracy of sketches.
As real traffic does not have deletion operations, we only show results for relative error CDF and relative error with respect to insertions.
We have 10M real traffic instances and regard the traffic with the same destination IP address to belong to the same flow.
Using this definition of a flow, there are about 230K flows in our data set, and the size distribution of flows is biased with an expected value of 8.1 and variance of 1606.
We set $d$ = 5, $w$ = 300000, $z$ = 20 in this set of experiments.

\noindent\textbf{Relative Error CDF:}
\emph{Our experimental results, reported in Figure \ref{fig:real:cdf}, show that in case of real world traffic, after a total of 10M insertions of the 230K distinct items, 99.81\% items with our SF-sketch have no error, while the percentages of items with CML, C, CM and CU sketches are 72.26\%, 79.28\%, 96.21\% and 99.06\%, respectively.
}

\noindent\textbf{Relative Error vs. \# of Insertions:}
\textit{Our experimental results, reported in Figure~\ref{fig:real:insert}, show that for our real world traffic, the average relative error of SF-sketch is [25.4 to 4246.2], [98.0 to 1341.8], [11.8 to 17.3]  and [3.5 to 4.8] times smaller than the average relative errors of CML, C, CM, and CU sketches, respectively.}

\vspace{0.1in}
\presub
\subsubsection{Sketch Parameters}
\postsub
Next, we evaluate the effect of changing the system parameters $d$ (the number of arrays) and $w$ (the number of buckets per array) on the accuracy of the sketches.
%
In each experiment to evaluate the effect of system parameters, we insert the 100K distinct items 10M times.

\textbf{Accuracy vs. \textit{w}:}
\textit{Our experimental results show that the CU-sketch requires 1.5 times more memory compared to the SF-sketch to achieve close to 1\% average relative error.}
Figure~\ref{fig:cpu:exp:opt-fixed-d} plots the average relative error by varying the number of buckets per array with $d$ fixed at $5$.
We observe from this figure that increasing the number of buckets per array reduces the average relative error for each sketch.
However, we observe that at 30K buckets per array, the average relative error of our SF-sketch reduces to a very small value of just 0.047.
On the contrary, the CU-sketch requires 50K buckets per array to achieve an average relative error of 0.049, but note that it does not support deletions.
The CML, C and CM-sketch did not achieve close to 0 average relative errors in our experiment.



\textbf{Accuracy vs. \textit{d}:}
\textit{Our experimental results show that our SF-sketch achieve an average relative error of 5.6\% using only 3 arrays whereas the CU-sketch takes 6 arrays to come close to the error of SF-sketch and achieves the average relative error of 7.1\%.}
This shows that compared to CU-sketch, at this error rate, SF-sketch takes only half as much memory.
Figure~\ref{fig:cpu:exp:opt-fixed-w} plots the average relative error by varying $d$ with $w$ fixed at $40K$.
We observe from this figure that using 6 arrays, SF-sketch achieve an average relative error of 1.9\%.
We also observe that increasing the number of arrays reduces the average relative error for all the sketches.
%
%
\balance


%


\presub
\subsection{Experiments on Update Speed and Query Speed}
\postsub
Next, we evaluate the update speed and query speed of the sketches.
For GPU platform, the query speed is the throughput, and we also need to evaluate the latency, which is measured in microseconds and quantifies the time duration between submitting a query and receiving the response.
%
%
%
We observed from our experiments that the query speeds of all the five sketches (\ie, CML, C, CM, CU, SF sketches) have little differences.
This observation was expected because the query operations of these sketches are almost the same.
For this reason, we only present the experimental results of the query speed of SF-sketch and CM-sketch.

\presub
\subsubsection{Update Speed}
\label{subsubsec:updatespeed}
\postsub
\textit{Our experimental results show that the insertion and deletion speeds of SF-sketch are $0.56$ times and $0.46$ times of those of CM-sketch, respectively.}
Our experimental results show that the CM-sketch achieves the fastest insertion and deletion speeds among CML, C, CU sketches.
Thus, we only compare our SF-sketch with CM-sketch in terms of update speed (measured in Mups, million updates per second).
Figure~\ref{fig:cpu:exp:updatespeed} plots the insertion and deletion speeds of both SF-sketch with CM-sketch.
We conducted experiments using many datasets and observed similar results.
Figure~\ref{fig:cpu:exp:updatespeed} shows plots for 10 randomly chosen datasets.
We use the same sketch parameters as used in Figure \ref{fig:cpu:exp:ARE-ZIPF-DEL}.
Our results also show that the insertion and deletion speeds of SF-sketch are nearly half of those of CM-sketch.
This is because SF-sketch needs to update two subsketches for each update.
In this experiment, we only use one CPU core to perform updates.
One can achieve much faster update speed using multiple cores or GPU.

\presub
\subsubsection{Multi-core CPU Platform}
\label{subsubsec:OnMulti-coreCPUPlatform}
%
\textit{Our experimental results show that the SF-sketch experienced a throughput gain of about 650K queries per second per thread up to 24 threads.}
Figure~\ref{fig:cpu:exp:multi-threads} plots the throughput vs. the number of threads for the SF-sketch.
We observe from this figure that SF-sketch achieves a throughput of about 1.34M queries per second with a single thread.
For this experiment, we performed 10M queries.
%
%
Using 24 threads, it achieves a throughput of about 16.3M queries per second.
We further observed that increasing the number of threads beyond 24 hardly brought about any improvement because our CPU has $6\times 2$ cores, which support $6\times2\times2=24$ threads.
This suggests that the query speed of our SF-sketch increases linearly with the number of CPU cores.
%

%



\presub
\subsubsection{GPU Platform}
%

\textit{ Our experimental results for three different data sets show that the query speed in GPU increases with the increase in the batch size.}
As shown in Figure~\ref{fig:gpu:exp:speed}, for the batch size of 20K queries, the query speed is around 50 million queries per second (Mqps).
With increase in the batch size, such as 64K queries per batch, SF-sketch reaches a query speed higher than 110 Mqps.
%


\textit{Our experimental results for three different data sets show that for SF-sketch, to reduce latency, the batch size of 28K is the most optimal for our experimental setup.}
Figure \ref{fig:gpu:exp:latency} shows that the average query latency of SF-sketch is below $410~\mu s$ for batch sizes $\leqslant 28K$.
For batch sizes $\geqslant 32k$, the latency increases to $511 \sim 584 ~\mu s$.
%
%

	\presec
\section{Conclusion} \postsec
\label{sec:conclusion}
%
In this paper, we proposed a new sketch for data streams, namely the SF-sketch, which achieves up to 33.1 times higher accuracy compared to the CM-sketch while keeping the query speed as fast as the CM-sketch.
The key idea behind our proposed SF-sketch is to maintain two separate sketches, one is called Slim-subsketch and the other is called Fat-subsketch.
We send only the Slim-subsketch to the remote collector to achieve accurate and fast query, while the Fat-subsketch enables the Slim-subsketch to achieve high query accuracy.
To evaluate and compare the performance of our proposed SF-sketch, we conducted extensive experiments on multi-core CPU and GPU platforms.
Our experimental results show that our SF-sketch significantly outperforms the-state-of-the-art in terms of accuracy.

	\balance
	
	\vfill
	\bibliographystyle{abbrv}
	\bibliography{reference}
    \appendix

\section*{Appendix I: Proof of A$_\textbf{i}$[j] $\leqslant$ max$_{\textbf{k=1}}^\textbf{z}$B$_\textbf{i}$[j][k]} \label{appdx:ABCounterValuesRelationship}
Here, we use induction to prove that for SF$_\text{F}$-sketch, the value of each counter in the Slim-subsketch is always less than or equal to the value of the largest counter in the corresponding bucket of the Fat-subsketch.
Before delving into the proof, we first define some notations.
Let $A^n_i[j]$ represent the value of the counter in the $j^{\text{th}}$ bucket of the $i^{\text{th}}$ array in the Slim-subsketch after $n$ updates, where $1\leqslant i\leqslant d$, $1\leqslant j\leqslant w$, and $n \in \mathbb{N}^{0}$.
Similarly, let $B^n_i[j][k]$ represent the value of the $k^{\text{th}}$ counter in the $j^{\text{th}}$ bucket of the $i^{\text{th}}$ array in the fat-subsketch after $n$ updates, where $1\leqslant i\leqslant d$, $1\leqslant j\leqslant w$, $1\leqslant k\leqslant z$, and $n \in \mathbb{N}^{0}$.
As querying does not change any counter in either Fat-subsketch or Slim-subsketch, we only consider insertion and deletion operations for this proof.

\begin{thm}
\label{thm:ABCounterValuesRelationship}
Each counter in the Slim-subsketch of the SF$_\text{F}$-sketch is always less than or equal to the largest counter in the corresponding bucket of the Fat-subsketch regardless of the number of insertion and deletion operations performed.
Formally, $\forall i\in[1,d], j\in[1,w]$, and $n \in \mathbb{N}^{0}$,
\begin{equation}
\label{equ_a:not_bigger}
A^{n}_i[j] \leqslant \max_{k=1}^{z}B^{n}_i[j][k]
\end{equation}
\end{thm}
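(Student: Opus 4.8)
The plan is to argue by induction on the number $n$ of update operations, where a single update is one insertion or one deletion. \textbf{Base case ($n=0$):} every counter of both subsketches is initialized to $0$, so $A^0_i[j]=0=\max_{k=1}^{z}B^0_i[j][k]$, and \eqref{equ_a:not_bigger} holds with equality. \textbf{Inductive step:} assume \eqref{equ_a:not_bigger} holds after $n$ updates and let the $(n+1)$-th update act on an item $e$. Since the Slim- and Fat-subsketches share the bucket hash functions $h_1(.),\ldots,h_d(.)$, the only counters touched in array $i$ are the Slim counter $A_i[h_i(e)]$ and the Fat counter $B_i[h_i(e)][f_i(e)]$; hence for every $(i,j)$ with $j\neq h_i(e)$ both sides of \eqref{equ_a:not_bigger} are unchanged and the inequality is inherited from the hypothesis. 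It remains to handle, for each $i\in[1,d]$, the bucket $j:=h_i(e)$, treating insertions and deletions separately.

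\textbf{Insertion.} The Fat counter $B_i[j][f_i(e)]$ is incremented by $1$, so $\max_{k}B^{n+1}_i[j][k]\geqslant\max_{k}B^{n}_i[j][k]$. If the SF$_\text{F}$-sketch insertion rule leaves $A_i[j]$ unchanged, then $A^{n+1}_i[j]=A^{n}_i[j]\leqslant\max_{k}B^{n}_i[j][k]\leqslant\max_{k}B^{n+1}_i[j][k]$. If instead the rule increments $A_i[j]$, then necessarily $A^{n}_i[j]<B_e^{\min}$; since $B_e^{\min}=\min_{l=1}^{d}B^{n+1}_l[h_l(e)][f_l(e)]\leqslant B^{n+1}_i[j][f_i(e)]$ and all counters are integers, $A^{n}_i[j]\leqslant B^{n+1}_i[j][f_i(e)]-1$, whence $A^{n+1}_i[j]=A^{n}_i[j]+1\leqslant B^{n+1}_i[j][f_i(e)]\leqslant\max_{k}B^{n+1}_i[j][k]$.

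\textbf{Deletion.} The Fat counter $B_i[j][f_i(e)]$ is decremented by $1$, so $\max_{k}B^{n+1}_i[j][k]$ either equals $\max_{k}B^{n}_i[j][k]$ or is smaller by exactly $1$. The SF$_\text{F}$-sketch deletion rule changes $A_i[j]$ only when this bucket maximum drops \emph{and} $A^{n}_i[j]>\max_{k}B^{n+1}_i[j][k]$, in which case it sets $A^{n+1}_i[j]=\max_{k}B^{n+1}_i[j][k]$, so \eqref{equ_a:not_bigger} holds with equality. Otherwise $A^{n+1}_i[j]=A^{n}_i[j]$: if the maximum did not drop, then $A^{n}_i[j]\leqslant\max_{k}B^{n}_i[j][k]=\max_{k}B^{n+1}_i[j][k]$ by the hypothesis; and if it dropped but $A^{n}_i[j]\leqslant\max_{k}B^{n+1}_i[j][k]$, that is exactly the desired bound. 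This exhausts all cases and closes the induction.

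I do not expect a deep obstacle; the argument is a finite case analysis driven directly by the update rules. The step that genuinely requires care — and that I would flag explicitly in the write-up — is the inequality $B_e^{\min}\leqslant B^{n+1}_i[h_i(e)][f_i(e)]$ used in the insertion case: it is valid precisely because $B_e^{\min}$ is the minimum of the Fat hashed counters taken \emph{after} those counters have been incremented, so it cannot exceed the particular Fat counter in the bucket under analysis. A secondary point to record is that deletions are assumed well-formed (one only deletes items previously inserted), which is what keeps all Fat counters non-negative; the inductive step itself does not otherwise rely on this.
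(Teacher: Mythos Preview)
Your proof is correct and follows the same inductive skeleton as the paper: induct on the number of updates, observe that only the bucket $j=h_i(e)$ can change, and treat insertions and deletions separately. The one notable difference is in the insertion case: the paper splits on whether $A^{n}_i[j]<\max_k B^{n}_i[j][k]$ or $A^{n}_i[j]=\max_k B^{n}_i[j][k]$ and dispatches the equality sub-case by contradiction, whereas you split on whether $A_i[j]$ is actually incremented and, when it is, bound directly via $A^{n}_i[j]<B_e^{\min}\leqslant B^{n+1}_i[j][f_i(e)]$. Your organization is more economical and avoids the contradiction detour; the paper's version makes the role of the induction hypothesis slightly more visible in the ``unchanged'' branch, but both arrive at the same place.
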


\begin{proof}
We prove this theorem by induction on $n$, \ie, the number of updates (insertions and deletions).

\noindent\textbf{Base case:}
The base case occurs when $n=0$, \ie, no update has yet been done and all counters in both subsketches are currently 0.
Hence, $\forall i\in[1,d]$ and $j\in[1,w]$, $A^{0}_i[j] = \max_{k=1}^{z}B^{0}_i[j][k]=0$.
Thus, the base case of $n=0$ satisfies Equation \eqref{equ_a:not_bigger}.

\noindent\textbf{Induction hypothesis:}
Let for $n = x$, Equation \eqref{equ_a:not_bigger} holds true, \ie,
$\forall i\in[1,d], j\in[1,w]$, and $n \leqslant x$, $A^{x}_i[j] \leqslant \max_{k=1}^{z}B^{x}_i[j][k]$.
Our inductive hypothesis is that if Equation \eqref{equ_a:not_bigger} holds true for $\forall i\in[1,d]$ and $j\in[1,w]$ when $n\leqslant x$, then it also holds true when $n=x+1$, \ie, $A^{x+1}_i[j] \leqslant \max_{k=1}^{z}B^{x+1}_i[j][k]$.

\noindent\textbf{Induction step:}
To prove that the induction hypothesis is correct, we pick an arbitrary counter $A_i[j]$ in the Slim-subsketch, whose current value after $x$ updates is $A^x_i[j]$.
We show that the induction hypothesis is true for this arbitrary counter.
As our choice of the counter is arbitrary, this in turn implies that all corresponding counters of the Slim- and Fat-subsketches satisfy the inductive hypothesis.

Let the item that is inserted/deleted in the $x+1^{\text{st}}$ update is represented by $e$.
There are following two cases to consider when performing the $x+1^{\text{st}}$ update: $j\neq h_i(e)$ and $j=h_i(e)$.
We consider them one by one.

\noindent\underline{\textbf{\textit{Case 1:}} $j\neq h_i(e)$}
This case implies that in inserting or deleting $e$, neither the value of the counter $A_i[j]$ is modified nor the value of any counter in the bucket $B_i[j]$ is modified.
Consequently, $A^{x+1}_i[j]=A^{x}_i[j]$ and $\forall k\in[1,z], B^{x+1}_i[j][k]=B^{x}_i[j][k]$.
Thus, $A^{x+1}_i[j]\leqslant \max_{k=1}^z B^{x+1}_i[j][k]$.
The inductive hypothesis holds true for Case 1.

\noindent\underline{\textbf{\textit{Case 2:}} $j= h_i(e)$}
This case implies that in inserting or deleting $e$, the value of the counter $A_i[j]$ and the value of $\max_{k=1}^z B_i[j]$ may be modified.
We consider the insertion and deletion operations separately.

\noindent\textbf{\textit{Insertion:}}
For insertion, there are two cases that we need to consider: when $A^{x}_i[j] < \max_{k=1}^{z}B^{x}_i[j][k]$ and when $A^{x}_i[j] = \max_{k=1}^{z}B^{x}_i[j][k]$.

For the first case of insertion, when $A^{x}_i[j] < \max_{k=1}^{z}B^{x}_i[j][k]$, after incrementing the value of a counter in the bucket $B_i[j]$ for the $x+1^\text{st}$ update, clearly the following inequality holds: $\max_{k=1}^{z}B^{x}_i[j][k] \leqslant \max_{k=1}^{z}B^{x+1}_i[j][k] $.
As the value of $A_i[j]$ can at most increment by 1 during an insertion, the following inequality also holds $A^{x+1}_i[j] \leqslant \max_{k=1}^{z}B^{x}_i[j][k]$.
Combining these two inequalities, we get $A^{x+1}_i[j] \leqslant \max_{k=1}^{z}B^{x}_i[j][k] \leqslant \max_{k=1}^{z}B^{x+1}_i[j][k]$.
Removing the middle term, we get $A^{x+1}_i[j] \leqslant \max_{k=1}^{z}B^{x+1}_i[j][k]$.
Thus, the inductive hypothesis holds true for this first case of insertion.

For the second case of insertion, when $A^{x}_i[j] = \max_{k=1}^{z}B^{x}_i[j][k]$, we use proof by contradiction to prove that the induction hypothesis is true.
Assume that the induction hypothesis is not true for this second case of insertion, which means that the inequality $A^{x+1}_i[j] > \max_{k=1}^{z}B^{x+1}_i[j][k]$ is true.
For this inequality to be true, according to our insertion algorithm of SF$_\text{F}$-sketch, following two conditions must be met at the same time: $\max_{k=1}^{z}B^{x+1}_i[j][k] = \max_{k=1}^{z}B^{x}_i[j][k]$ and $A^{x+1}_i[j] = A^{x}_i[j]+1$.
Suppose the first condition is satisfied, \ie, $\max_{k=1}^{z}B^{x+1}_i[j][k] = \max_{k=1}^{z}B^{x}_i[j][k]$, which implies that $B^{x}_i[j][f_i(e)] < \max_{k=1}^{z}B^{x}_i[j][k]$.
Furthermore, $B^{x}_i[j][f_i(e)] = B^{x+1}_i[j][f_i(e)] - 1$.
We get

\begin{IEEEeqnarray}{CrCl}
&B^{x+1}_i[j][f_i(e)] -1 &< &\max_{k=1}^{z}B^{x}_i[j][k]\nonumber\\
&\Rightarrow B^{x+1}_i[j][f_i(e)] &\leqslant &\max_{k=1}^{z}B^{x}_i[j][k]
\label{equ:implicationCond1Ins}
\end{IEEEeqnarray}

Suppose the second condition is also satisfied, \ie, $A^{x+1}_i[j] = A^{x}_i[j]+1$, which implies that the counter value $A^x_i[j]$ must be minimum among all counters in the Slim-subsketch to which the item $e$ maps, \ie, $A_i^{x}[j] = \min_{i'=1}^{d} A_{i'}^{x}[h_{i'}(e)]$, and this counter value $A^x_i[j]$ must also be smaller than the smallest value among all counters in the Fat-subsketch to which the item $e$ maps, \ie, $A_i^{x}[j] < \min_{i'=1}^{d}B^{x+1}_{i'}[h_{i'}(e)][f_{i'}(e)]$.
As $\min_{i'=1}^{d}B^{x+1}_{i'}[h_{i'}(e)][f_{i'}(e)]\leqslant B^{x+1}_{i}[j][f_{i}(e)]$, we get $A_i^{x}[j]<B^{x+1}_{i}[j][f_{i}(e)]$.
Applying the result from Equation \eqref{equ:implicationCond1Ins}, we get $A_i^{x}[j]<\max_{k=1}^{z}B^{x}_i[j][k]$.
This contradicts with the condition of this second case of insertion, which states that $A^{x}_i[j] = \max_{k=1}^{z}B^{x}_i[j][k]$.
Thus, the assumption that the induction hypothesis is not true for this second case of insertion,\ie, $A^{x+1}_i[j] > \max_{k=1}^{z}B^{x+1}_i[j][k]$ is not correct.
This implies that, $A^{x+1}_i[j] \leqslant \max_{k=1}^{z}B^{x+1}_i[j][k]$, and the induction hypothesis holds true for this second case of insertion.

\noindent\textbf{\textit{Deletion:}}
For deletion, there are again two cases that we need to consider just like for insertion: when $A^{x}_i[j] < \max_{k=1}^{z}B^{x}_i[j][k]$ and when $A^{x}_i[j] = \max_{k=1}^{z}B^{x}_i[j][k]$.

For the first case of deletion, when $A^{x}_i[j] < \max_{k=1}^{z}B^{x}_i[j][k]$, as the value of $\max_{k=1}^{z}B_i[j][k]$ can at most decrement by 1 during a deletion, the following inequality holds: $A^{x}_i[j]\leqslant \max_{k=1}^{z}B^{x+1}_i[j][k]$.
Similarly, as the value of $A_i[j]$ can at most decrement by 1 during a deletion, the following inequality also holds $A^{x+1}_i[j] \leqslant A^{x}_i[j]$.
Combining these two inequalities, we get $A^{x+1}_i[j] \leqslant A^{x}_i[j] \leqslant \max_{k=1}^{z}B^{x+1}_i[j][k]$.
Removing the middle term, we get $A^{x+1}_i[j] \leqslant \max_{k=1}^{z}B^{x+1}_i[j][k]$.
Thus, the inductive hypothesis holds true for this first case of deletion

For the second case of deletion, when $A^{x}_i[j] = \max_{k=1}^{z}B^{x}_i[j][k]$, we use proof by contradiction to prove that the induction hypothesis is true.
Assume that the induction hypothesis is not true for this second case of insertion, which means that the inequality $A^{x+1}_i[j] > \max_{k=1}^{z}B^{x+1}_i[j][k]$ is true.
For this inequality to be true, according to our deletion algorithm of SF$_\text{F}$-sketch, following two conditions must be met at the same time: $\max_{k=1}^{z}B^{x+1}_i[j][k] = \max_{k=1}^{z}B^{x}_i[j][k] - 1$ and $A^{x+1}_i[j] = A^{x}_i[j]$.
Suppose the first condition is satisfied, \ie, $\max_{k=1}^{z}B^{x+1}_i[j][k] = \max_{k=1}^{z}B^{x}_i[j][k] - 1$, which implies the following.

\begin{IEEEeqnarray}{CrCl}
&\max_{k=1}^{z}B^{x+1}_i[j][k] &<& \max_{k=1}^{z}B^{x}_i[j][k]
\label{equ:implicationCond1Del}
\end{IEEEeqnarray}

Suppose the second condition is also satisfied, \ie, $A^{x+1}_i[j] = A^{x}_i[j]$, which implies that $A_i^{x}[j] \leqslant \max_{k=1}^{z}B^{x+1}_i[j][k]$.
Substituting the value of $\max_{k=1}^{z}B^{x+1}_i[j][k]$ from Equation \eqref{equ:implicationCond1Del}, we get $A_i^{x}[j] < \max_{k=1}^{z}B^{x}_i[j][k]$.
This contradicts with the condition of this second case of insertion, which states that $A^{x}_i[j] = \max_{k=1}^{z}B^{x}_i[j][k]$.
Thus, the assumption that the induction hypothesis is not true for this second case of deletion,\ie, $A^{x+1}_i[j] > \max_{k=1}^{z}B^{x+1}_i[j][k]$ is not correct.
This implies that, $A^{x+1}_i[j] \leqslant \max_{k=1}^{z}B^{x+1}_i[j][k]$, and the induction hypothesis holds true for this second case of deletion.
\end{proof}

\section*{Appendix II: Proof of No Under-estimation Error}\label{appdx:no_under}
Here, we use induction to prove that SF$_\text{F}$-sketch does not incur any under-estimation error.
Before delving into the proof, we first define some notations.
Similar to the notations defined in the Appendix \ref{appdx:ABCounterValuesRelationship}, let $A^n_i[j]$ represent the value of the counter in the $j^{\text{th}}$ bucket of the $i^{\text{th}}$ array in the Slim-subsketch after $n$ updates, where $1\leqslant i\leqslant d$, $1\leqslant j\leqslant w$, and $n \in \mathbb{N}^{0}$.
Similarly, let $B^n_i[j][k]$ represent the value of the $k^{\text{th}}$ counter in the $j^{\text{th}}$ bucket of the $i^{\text{th}}$ array in the fat-subsketch after $n$ updates, where $1\leqslant i\leqslant d$, $1\leqslant j\leqslant w$, $1\leqslant k\leqslant z$, and $n \in \mathbb{N}^{0}$.
Furthermore, let $SQ^{n}(e)$ represent estimate of the frequency of an arbitrary item $e$ calculated from the Slim-subsketch after $n$ updates.
Similarly, let $FQ^{n}(e)$ represent estimate of the frequency of an arbitrary item $e$ calculated from the Fat-subsketch after $n$ updates.
As Fat-subsketch of the SF$_\text{F}$-sketch behaves exactly like a CM-sketch for any given item $e$, we directly use the conclusion from \cite{cmsketch} that CM-sketch does not suffer from underestimation error, \ie, for any arbitrary item $e$ and $\forall n \in \mathbb{N}^{0}, FQ^{n}(e) \geqslant f^{n}_{(e)}$, where $f^{n}_{(e)}$ represents the actual frequency of item $e$ after $n$ updates.
As querying does not change any counter in either Fat-subsketch or Slim-subsketch, we only consider insertion and deletion operations (collectively called updates) in this proof.

\begin{thm}
\label{thm_b:no_under}
For updates consisting of insertions and deletions, the SF$_\text{F}$-sketch does not incur underestimation error.
Formally, for any arbitrary item $e$ and $\forall n \in \mathbb{N}^{0}$

\begin{equation}
\label{equ_b:no_under}
SQ^{n}(e) \geqslant f^{n}_{(e)}
\end{equation}
\end{thm}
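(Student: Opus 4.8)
The plan is to prove Theorem~\ref{thm_b:no_under} by induction on $n$, the number of updates (insertions and deletions), keeping the shape of the induction used in Appendix I but tracking a \emph{lower} bound on $SQ^{n}(e)=\min_{i=1}^{d}A^{n}_i[h_i(e)]$ rather than an upper bound, and using only the two ingredients already available: the inductive hypothesis $SQ^{n}(e)\geqslant f^{n}_{(e)}$ for every item, and the no-underestimation property of the Fat-subsketch, $FQ^{n}(e)=\min_{i=1}^{d}B^{n}_i[h_i(e)][f_i(e)]\geqslant f^{n}_{(e)}$, which holds because the Fat-subsketch is updated exactly like a CM-sketch~\cite{cmsketch}. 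The base case $n=0$ is immediate, since all counters and all $f^{0}_{(e)}$ are $0$. For the inductive step, assume the claim for $n=x$, let $e'$ be the item updated at step $x+1$, fix an arbitrary item $e$, and pick an index $l'$ with $A^{x+1}_{l'}[h_{l'}(e)]=SQ^{x+1}(e)$.

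I would first dispose of the insertion case. If $e'\neq e$, then $f_{(e)}$ is unchanged and no Slim counter ever decreases on an insertion, so $SQ^{x+1}(e)\geqslant SQ^{x}(e)\geqslant f^{x}_{(e)}=f^{x+1}_{(e)}$. If $e'=e$, write $m=\min_{i=1}^{d}A^{x}_i[h_i(e)]=SQ^{x}(e)$ and note that the quantity $B_e^{\min}$ computed by the insertion algorithm equals $FQ^{x+1}(e)$, since the Fat-subsketch has already been incremented. If $m<B_e^{\min}$, the algorithm increments every hashed Slim counter whose value is $m$; as every other hashed counter is an integer greater than $m$, hence at least $m+1$, the new minimum satisfies $SQ^{x+1}(e)\geqslant m+1=SQ^{x}(e)+1\geqslant f^{x}_{(e)}+1=f^{x+1}_{(e)}$. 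If $m\geqslant B_e^{\min}$, the algorithm changes nothing, so $SQ^{x+1}(e)=m\geqslant B_e^{\min}=FQ^{x+1}(e)\geqslant f^{x+1}_{(e)}$.

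For a deletion of any item $e'$, recall that the Fat-subsketch deletion is a CM-sketch decrement, so $FQ^{x+1}(e)\geqslant f^{x+1}_{(e)}$ still holds, and that the SF$_\text{F}$ Slim deletion rule either leaves $A_{l'}[h_{l'}(e)]$ unchanged or, only when $h_{l'}(e)=h_{l'}(e')$, overwrites it with $\max_{k=1}^{z}B^{x+1}_{l'}[h_{l'}(e)][k]$. In the first situation $A^{x+1}_{l'}[h_{l'}(e)]=A^{x}_{l'}[h_{l'}(e)]\geqslant\min_{i=1}^{d}A^{x}_i[h_i(e)]=SQ^{x}(e)\geqslant f^{x}_{(e)}\geqslant f^{x+1}_{(e)}$, using $f^{x+1}_{(e)}\leqslant f^{x}_{(e)}$ for a deletion. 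In the second situation $A^{x+1}_{l'}[h_{l'}(e)]=\max_{k=1}^{z}B^{x+1}_{l'}[h_{l'}(e)][k]\geqslant B^{x+1}_{l'}[h_{l'}(e)][f_{l'}(e)]\geqslant\min_{i=1}^{d}B^{x+1}_i[h_i(e)][f_i(e)]=FQ^{x+1}(e)\geqslant f^{x+1}_{(e)}$. In every case $SQ^{x+1}(e)=A^{x+1}_{l'}[h_{l'}(e)]\geqslant f^{x+1}_{(e)}$, which closes the induction.

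The step I expect to be the main obstacle is the insertion sub-case $e'=e$: one has to argue precisely that ``doing nothing'' can occur only because the current Slim minimum already dominates $FQ^{x+1}(e)$, and, when counters are incremented, one must handle ties at the minimum and the fact that the index $l'$ attaining the post-update minimum may differ from the pre-update minimizer --- which is why it is cleaner to bound $SQ^{x+1}(e)$ directly via $m$ and $B_e^{\min}$ than through a single fixed counter. The deletion case is subtler only in noticing that the overwrite value $\max_{k}B^{x+1}_{l'}[h_{l'}(e)][k]$ need not coincide with any of $e$'s own ``hashed'' Fat-counters, so the bound must be routed through $B^{x+1}_{l'}[h_{l'}(e)][f_{l'}(e)]$ and then through $FQ^{x+1}(e)$; the remaining bookkeeping is routine.
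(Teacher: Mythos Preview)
Your proposal is correct and follows essentially the same induction-on-updates strategy as the paper's Appendix~II proof, using the Fat-subsketch's CM no-underestimation property for the deletion case and the insertion rule's ``only increment the minimum up to $B_e^{\min}$'' semantics for the insertion case. Your choice to fix $l'$ as the \emph{post}-update minimizer and to split the insertion case on $e'=e$ versus $e'\neq e$ (rather than on a hash collision at a pre-update minimizer) is a slightly cleaner bookkeeping choice that avoids the need to argue that the minimizing index is preserved, but the underlying argument is the same.
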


\begin{proof}
We prove this theorem by induction on $n$, \ie, the number of updates (insertions and deletions).

\noindent\textbf{Base case:}
The base case occurs when $n=0$, \ie, no update has yet been done and all counters in both sub-sketches are currently 0.
Thus, for any item, the frequency returned by the Slim-subsketch at this point will be 0, \ie for any arbitrary item $e$, $SQ^0(e)=0$.
As no items have yet been inserted, therefore, $f^0_{(e)}=0$.
Therefore, $SQ^0(e)= f^0_{(e)}$.
Thus, the base case of $k=0$ satisfies Equation \eqref{equ_b:no_under}.

\noindent\textbf{Induction hypothesis:}
Let for $n = x$, Equation \eqref{equ_b:no_under} holds true, \ie, for any arbitrary item $e$, $SQ^{x}(e) \geqslant f^{x}_{(e)}$.
Our inductive hypothesis is that if Equation \eqref{equ_b:no_under} holds true for any arbitrary item $e$ when $n\leqslant x$, then it also holds true for that item when $n=x+1$, \ie, for item $e$, $SQ^{x+1}(e) \geqslant f^{x+1}_{(e)}$.

\noindent\textbf{Induction step:}
We use $n=x+1$ in the inductive step.
The value of $SQ^{x}(e)$ is given as below.
\begin{equation}
\begin{aligned}
SQ^{x}(e) = A^{x}_{i'}[h_{i'}(e)]
\end{aligned}
\label{equ_b:thm_1}
\end{equation}
where $i'\in [1,d]$ such that $A^{x}_{i'}[h_{i'}(e)] = \min_{i=1}^{d}A^{x}_i[h_i(e)]$.
Suppose the item that we are going to insert/delete in the $x+1^{\text{st}}$ update is $e'$, where $e'$ can also be any arbitrary item.
We choose to use $e'$ in the induction step instead of $e$ to make the inductive step generic.
Note that the inductive step holds when $e'=e$ as well as when $e'\neq e$.
Next, we apply the inductive step separately to the two cases: 1) item $e'$ is inserted; 2) item $e'$ is deleted.

\noindent\textit{\textbf{Insertion:}}
When inserting the item $e'$, there are two cases.
The first case occurs when $h_{i'}(e)=h_{i'}(e')$ regardless of whether $e=e'$ or $e\neq e'$.
The second case occurs when $h_{i'}(e)\neq h_{i'}(e')$.
Note that this second case occurs only when $e\neq e'$.

For the first case, where $h_{i'}(e)=h_{i'}(e')$, the counter $A_{i'}[h_{i'}(e)]$ may be incremented if $A^{x}_{i'}[h_{i'}(e)] < \min_{i=1}^{d}B^{x+1}_i[h_i(e')][f_i(e')]$.
In this case,
    \begin{equation*}
    \begin{aligned}
        A^{x+1}_{i'}[h_{i'}(e)]&=A^{x}_{i'}[h_{i'}(e)]+1\\
        &= SQ^x(e)+1\\
        &\geqslant f^x_{(e)}+1
    \end{aligned}
    \end{equation*}
If $e\neq e'$, then  $f^x_{(e)}+1>f^{x+1}_{(e)}$, otherwise $f^x_{(e)}+1=f^{x+1}_{(e)}$.
Combining these two, we get $f^x_{(e)}+1\geqslant f^{x+1}_{(e)}$.
Substituting this inequality in the equation above, we get
    \begin{equation*}
    \begin{aligned}
        A^{x+1}_{i'}[h_{i'}(e)]\geqslant f^x_{(e)}+1\geqslant f^{x+1}_{(e)}
    \end{aligned}
    \end{equation*}
As $A^{x+1}_{i'}[h_{i'}(e)]=SQ^{x+1}(e)$, we get
    \begin{equation*}
    \begin{aligned}
        SQ^{x+1}(e)\geqslant f^{x+1}_{(e)}
    \end{aligned}
    \end{equation*}
Thus, the inductive hypothesis holds true for the first case.

For the second case, where $h_{i'}(e)\neq h_{i'}(e')$, the counter $A_{i'}[h_{i'}(e)]$ stays unchanged, \ie, $A^{x+1}_{i'}[h_{i'}(e)] = A^{x}_{i'}[h_{i'}(e)]$, which means that $SQ^{x+1}(e) = SQ^x(e)$.
This further implies that $SQ^{x+1}(e) \geqslant f^{x}_{(e)}$.
As for this case, $e\neq e'$, $f^x_{(e)}=f^{x+1}_{(e)}$, we get $SQ^{x+1}(e) \geqslant f^{x+1}_{(e)}$.
Thus, the inductive hypothesis holds true for the second case as well.

\noindent\textit{\textbf{Deletion:}}
According to the deletion operation of the SF$_\text{F}$-sketch, a deletion after $x$ updates may cause a counter $A_i[j]$ in Slim-subsketch to either stay unchanged or be set to $\max_{k=1}^{z}B^{x+1}_{i}[h_i(e)][k]$.
For each $i\in[1,d]$, if $h_{i}(e)=h_{i}(e')$ and $A^{x}_{i}[h_{i}(e)] > \max_{k=1}^{z}B^{x+1}_i[h_i(e)][k]$, we have
    \begin{equation}
    \begin{aligned}
        A^{x+1}_{i}[h_{i}(e)]&=\max_{k=1}^{z}B^{x+1}_i[h_i(e)][k]\\
        &\geqslant B^{x+1}_i[h_i(e)][f_i(e)]\\
        &\geqslant \min_{i'=1}^{d}B^{x+1}_{i'}[h_{i'}(e)][f_{i'}(e)]\\
        &= FQ^{x+1}(e)\\
        & \geqslant f^{x+1}_{(e)}
    \end{aligned}
    \label{equ_b:thm_del_h}
    \end{equation}
    Otherwise, we have:
    \begin{equation}
    \begin{aligned}
        A^{x+1}_{i}[h_{i}(e)] &= A^{x}_{i}[h_{i}(e)] \geqslant \min_{i'=1}^{d}A^{x}_{i'}[h_{i'}(e)]\\
        & = SQ^x(e) \geqslant f^x_{(e)} \geqslant f^{x+1}_{(e)}
    \end{aligned}
    \label{equ_b:thm_del_t}
    \end{equation}
Thus, each counter to which $e$ maps stays larger than the actual frequency of $e$ on deletion after $x$ updates, which in turn implies that $SQ^{x+1}(e)\geqslant f^{x+1}_{(e)}$ and the inductive hypothesis holds true.
\end{proof}


\end{document}